\documentclass[11pt]{article}
\usepackage{amsmath}
\usepackage{amssymb}
\usepackage{fullpage}
\usepackage{amsthm}
\usepackage{epsf}
\usepackage{graphicx}
\usepackage{color}
\usepackage{enumerate}
\RequirePackage{hyperref}

\def\blackslug{\rule{2mm}{2mm}}
\def\QED{\hfill\blackslug}

\newcommand{\ignore}[1]{}

\theoremstyle{plain}

\newtheorem{theorem}{Theorem}
\newtheorem{lemma}[theorem]{Lemma}

\newtheorem{corollary}[theorem]{Corollary}
\newtheorem{definition}{Definition}

\theoremstyle{definition}
\newtheorem{remark}{Remark}

\renewenvironment{proof}{\noindent {\bf \em Proof\/}.\enspace}
{\hfill $\blacksquare{}$ \vspace{12pt}}

%% some of venkat's macros

\def\pr{\mathop{\bf Pr}\limits}

\renewcommand{\leq}{\leqslant}
\renewcommand{\geq}{\geqslant}
\renewcommand{\le}{\leqslant}
\renewcommand{\ge}{\geqslant}
\newcommand{\eps}{\varepsilon}
\renewcommand{\epsilon}{\varepsilon}

\newcommand{\vnote}[1]{}
\newcommand{\anote}[1]{}

%\newcommand{\arnote}[1]{}
%%% Macros from acp

\newcommand{\p}{\rho}
\newcommand{\F}{\mathbb{F}}

\newcommand{\angles}[1]{\langle #1 \rangle}

\renewcommand{\a}{\alpha}
\renewcommand{\b}{\beta}
\newcommand{\g}{\gamma}

\newcommand{\cout}{C_{\rm out}}
\newcommand{\cin}{C_{\rm in}}
\newcommand{\myvec}[1]{\mathbf{#1}}
\newcommand{\vg}{\myvec{G}}

\newcommand{\vu}{\myvec{u}}
\newcommand{\vm}{\myvec{m}}
\newcommand{\vx}{\myvec{x}}

\newcommand{\vy}{\myvec{y}}

\newcommand{\vd}{\myvec{d}}
\newcommand{\vc}{\myvec{c}}

\newcommand{\ve}{\myvec{e}}
%\newcommand{\pr}{\mathrm{Pr}}

%commands for the random lower bound stuff

\newcommand{\vol}[1]{\mathrm{Vol}_{#1}}

\newcommand{\wt}{\textsc{wt}}
\begin{document}

\title{{\bf Two Theorems in List Decoding}\footnote{Research
supported by NSF CAREER Award CCF-0844796.}}

\author{
{\sc Atri Rudra}\and {\sc Steve Uurtamo}}

\date{Department of Computer Science and Engineering,\\
University at Buffalo, The State University of New York,\\
Buffalo, NY, 14620.\\  {\tt \{atri,uurtamo\}@buffalo.edu}}

\maketitle

\setcounter{page}{0}
\thispagestyle{empty}

\begin{abstract}

\noindent
We prove the following results concerning the list decoding of error-correcting codes:
\begin{enumerate}
\item We show that for \textit{any} code with a relative distance of $\delta$
(over a large enough alphabet), the
following result holds for \textit{random errors}: With high probability,
for a $\rho\le \delta -\eps$ fraction of random errors (for any $\eps>0$), 
the received word will have only the transmitted codeword in a Hamming ball of 
radius $\rho$ around it. Thus, for random errors, one can correct \textit{twice} the
number of errors uniquely correctable from worst-case errors for any code. 
A variant of our
result also gives a simple algorithm to decode Reed-Solomon codes from
random errors that, to the best of our knowledge, runs faster than known
algorithms for certain ranges of parameters.
\item We show that concatenated codes can achieve the list decoding capacity
for \textit{erasures}. A similar result for worst-case
errors was proven by Guruswami and Rudra (SODA 08), although their result does 
not directly imply our result. Our results show that a subset of 
the random ensemble of codes considered by Guruswami and Rudra also achieve
the list decoding capacity for erasures.
\end{enumerate}
Our proofs employ simple counting and probabilistic arguments.
\end{abstract}

\newpage

\section{Introduction}

List decoding is a relaxation of the traditional unique decoding paradigm,
where one is allowed to output a list of codewords that are close to the
received word. This relaxation allows for designing list decoding algorithms
that can recover from scenarios where almost all of the redundancy could have
been corrupted~\cite{sudan,GS98,PV-focs05,GR-capacity}. In particular, one can
design binary codes from which one can recover from a $1/2-\eps$ fraction of errors.
This fact has lead to many surprising applications in complexity theory-- see
e.g. the survey by Sudan~\cite{sudan-sigact} and Guruswami's thesis~\cite[Chap. 12]{G-thesis}.

The results mentioned above mostly deal with worst-case errors, where 
the channel is considered to be an adversary that can corrupt any arbitrary
fraction of symbols (with an upper bound on the maximum fraction of such errors).
In this work, we deal with random and erasure noise models,
 which are weaker than the worst-case
errors model, and which also have interesting applications in complexity theory.

% Both of these noise models are natural models and our results are 
% interesting in their own right.

\subsection{Random Errors} 
It is well-known that for worst-case errors, one
cannot uniquely recover the transmitted codeword if the total number of errors
exceeds half the distance. (We refer the reader to Section~\ref{sec:prelims}  for definitions
related to codes.) 
List decoding circumvents this by allowing the decoder to output
multiple nearby codewords. In situations where the decoder has access to some
side information, one can prune the output list to obtain the transmitted 
codeword. In fact, most of the applications of list decoding in complexity
theory crucially use side information. However, a natural question to ask
is what one can do in situations where there is no side information (this is
not an uncommon assumption in the traditional point-to-point communication
model).

In such a scenario, it makes sense to look at a weaker random noise model and try
to argue that the pathological cases that prevent us from decoding a code
with relative distance $\delta$ from more than $\delta/2$ fraction of errors
are rarely encountered.

Before we move on, we digress a bit to establish our notion of random errors.
In our somewhat non-standard model, we assume that the adversary can pick the
location of the $\p$ fraction of error \textit{positions} but that the
errors themselves are random.
For the binary case, this model coincides with worst-case errors, so in 
this work, we consider alphabet size $q\ge 3$. We believe that this is a nice
intermediary to the worst-case noise model and the more popular models of random
noise, where errors are \textit{independent} across different symbols. Indeed,
a result with high probability in our random noise model (for roughly
$\p$ errors)
immediately implies a similar result for a more benign random noise model such as
the $q$-ary symmetric noise channel with cross-over probability $\p$.\footnote{In
this model, every transmitted symbol   remains untouched with probability $1-\p$
and is mapped to the other $q-1$ possible symbols with probability $\p/(q-1)$. Finally, the noise acts independently on each symbol.} For the rest of the
paper, when we say random errors, we will be referring to the stronger random
noise model above.

\paragraph{Related Work.} The intuition that pathological worst-case errors are rare has been 
formalized for certain families of codes. For example, McEliece showed that
for Reed-Solomon codes with distance $\delta$, with high probability, for a
fraction $\p\le \delta-\eps$ of random errors, the output list size is one~\cite{mceliece}.\footnote{The
actual result is slightly weaker: see Section~\ref{sec:randerrors} for more details.}
Further, for \textit{most} codes of rate $1-H_q(\p)-\eps$, with high
probability, for a $\p$ fraction of random errors, the output list size is one. (This follows from
Shannon's famous result on the capacity of the $q$-ary symmetric channel:
for a proof, see e.g.~\cite{R-cocoon}.) It is also known that most
codes of rate $1-H_q(\p)-\eps$ have relative distance at least $\p$. 
Further, for $q\ge 2^{\Omega(1/\eps)}$, it is known that such a code cannot 
have distance more than $\p+\eps$:
this follows from the Singleton bound and the fact that for such an alphabet size, 
$1-H_q(\p) \ge 1-\p-\eps$ (cf.~\cite[Sec 2.2.2]{atri-thesis}).

\paragraph{Our Results.}
In our first main result, we show that the phenomenon above is universal, that is,
for \textit{every} $q$-ary code, with $q\ge 2^{\Omega(1/\eps)}$, the following property
holds: if the code has relative distance $\delta$, then for any $\p\le \delta -\eps$
fraction of random errors, with high probability, the Hamming ball of fractional 
radius $\p$ around the
received word will only have the transmitted codeword in it. We would like to point out three
related points. First, our result implies that if we relax the worst-case error model to a
random error model, then combinatorially one can always correct \textit{twice} the
number of errors. Second, one cannot hope to correct more than a $\delta$ fraction of random
errors: it is easy to see that, for instance, for Reed-Solomon codes, \textit{any} error pattern of
relative Hamming weight $\p>\delta$ will give rise to a list size greater than one. Finally, the
proof of our result follows from a fairly straightforward counting argument.

A natural follow-up question to our result is whether the lower bound of $2^{\Omega(1/\eps)}$
on $q$ can be relaxed. We show that if $q$ is $2^{o(1/\eps)}$, then the result above is \textit{not
true}.
This negative result follows from the following two observations/results. First, it is known that for 
\textit{any} code with rate $1-H_q(\p)+\eps$, the average list size, over all possible received
words, is exponential. Second, it is known that Algebraic-Geometric (AG) codes over alphabets
of size at least $49$ can have relative distance \textit{strictly} bigger than $1-H_q(\p)$ (cf.~\cite{handbook}). 
However, these two results do not immediately imply the negative result for the random error case. In
particular, what we need to show is that there is at least one codeword $\vc$ such that for
most error patterns $\ve$ of relative Hamming weight $\p$, the received word $\vc+\ve$ has at least one
codeword other than $\vc$ within a relative Hamming distance of $\p$ from it. To show that this can indeed be 
true for AG codes, we use a generalization of an ``Inverse Markov argument" from
Dumer et al.~\cite{DMS}.

\paragraph{A Cryptographic Application.} 
In addition to being a natural noise model to study, list decoding in the random error model
has applications in cryptography. In particular, Kiayias and Yung have proposed cryptosystems
based on the hardness of decoding Reed-Solomon codes~\cite{rs-crypto}. 
However, if for Reed-Solomon codes (of rate $R$), one can list decode $\p$ fraction of 
\textit{random} errors then the cryptosystem from~\cite{rs-crypto} can be broken for the
corresponding parameter settings. Since Guruswami-Sudan can solve this problem for $\p\le 1-\sqrt{R}$
for \textit{worst-case} errors~\cite{GS98}, Kiayias and Yung set the parameter $\p>1-\sqrt{R}$. Beyond the
$1-\sqrt{R}$ bound, to the best of our knowledge,
the only known algorithms to decode Reed-Solomon codes are the following trivial ones:
(i) Go through all possible $q^k$ codewords and output all the codewords with Hamming distance
of $\p$ from the received word; and (ii) Go through all possible $\binom{n}{\p n}$ error locations
and output the codeword, if any, that agrees in the $(1-\p)n$ ``non-error" locations.

It is interesting to note that each of the three algorithms mentioned above work in the
stronger model of worst-case errors. However, since we only care about decoding from random errors, one
might hope to design better algorithms that make use of the fact that the errors are random.
In this paper, we show that (essentially) the proof of our first main result
implies a related result that in turn implies a modest improvement in the running time of algorithms
to decode Reed-Solomon codes from $\p>1-\sqrt{R}$ fraction of random errors. The related result
states the following: for \text{any} code with relative distance $\delta$ (over a large enough
alphabet) with high probability, for a $\p$ fraction of random errors, Hamming balls of
fractional radius $\delta-\eps$ around the received word only have the transmitted codeword in them.\footnote{A similar result was shown for Reed-Solomon codes by McEliece~\cite{mceliece}.} Note that
unlike the statement of our result mentioned earlier, we are considering Hamming balls of radius
\textit{larger} than the fraction of errors. This allows us to improve the second trivial
algorithm in the paragraph above so that one needs to verify fewer ``error patterns."
This leads to an asymptotic improvement in the running time over both of the trivial
algorithms for certain setting of parameters, though the running time is still exponential
and thus, too expensive to
break the Kiayias-Yung cryptosystem.

\subsection{Erasures}

In the second part of the paper, we consider the erasure noise model, where the decoder knows the
locations of the errors. (However, the error locations are still chosen by
 the adversary.) Intuitively, this noise model
is weaker than the general worst-case noise model as the decoder knows for sure which locations are
uncorrupted. This intuition can also be formalized. E.g., it is known that for a $\p$ fraction of
worst-case errors, the list decoding capacity is $1-H_q(\p)$, whereas for a $\p$ fraction of
 erasures, the list decoding capacity is $1-\p$ (cf. \cite[Chapter 10]{G-thesis}).
Note that the capacity for erasures is \textit{independent} of the alphabet size. 
As another example, for a linear code, a combinatorial guarantee on list decodability from
erasures gives a polynomial time list decoding algorithm. By contrast, such a result is
not known for worst-case errors.

% Atri: this bit is very hard to parse.  I'm not sure it's needed.
%
% Another manifestation of the intuition is the following simple fact: for
% linear codes, combinatorial list decoding from erasures implies that the
% algorithmic task of decoding from erasures is just solving a system of
% linear equations, which of course can be solved in polynomial time. By
% contrast, it is not known how to convert combinatorial list decodability
% for worst-case errors into efficient list decoding.

% I also put back the ``important milestone'' business.

As is often the case, the capacity result is proven by random coding arguments.
A natural quest then is to design explicit linear codes that achieve the list decoding capacity for erasures, and is an important milestone in the program of designing explicit codes that achieve list decoding capacity for worst-case
errors. This goal is the primary motivation for our second main result.

\paragraph{Our Result and Related Work.} For large enough alphabets, explicit linear codes that achieve list
decoding capacity for erasures are not hard to find: e.g., Reed-Solomon codes achieve the capacity. For
smaller alphabets, the situation is much different. For binary codes, Guruswami
presented explicit linear codes that can handle $\p=1-\eps$ fraction of erasures with rate
$\Omega\left(\frac{\eps^2}{\log(1/\eps)}\right)$~\cite{Gur-erasures}. For alphabets of size $2^t$, $1-\eps$ fraction
of erasures can be list decoded with explicit linear codes of rate
 $\Omega\left(\frac{\eps^{1+1/t}}{t^2\log(1/\eps)}\right)$~\cite[Chapter 10]{G-thesis}. Thus,
especially for binary codes, an explicit code with capacity of $1-\p$ is still a lofty goal.
(In fact, breaking the $\eps^2$ rate barrier for \textit{polynomially} small $\eps$ would
imply explicit construction of certain bipartite Ramsey graphs, solving an open
question~\cite{Gur-erasures}.)

To gain a better understanding about codes that achieve list decoding capacity for erasures, a natural
question is to ask whether \textit{concatenated codes} can achieve the list decoding capacity for erasures.
Concatenated codes are the preeminent method to construct good list decodable codes over
small alphabets. In fact, the best explicit list decodable binary codes (for both erasures~\cite{Gur-erasures} and
worst-case errors~\cite{GR-multilevel}) are concatenated codes. Briefly, in code concatenation, an ``outer" code over a
large alphabet is first used to encode the message. Then ``inner" codes over the smaller alphabet
are used to encode each of the symbols in the outer codeword.  These inner codes typically have a 
much smaller block length than the outer code, which allows one to use brute-force type algorithms
to search for ``good" inner codes. Also note that the rate of the concatenated code is the
product of the rate of the outer and inner codes.

Given that concatenated codes have such a rigid structure, it seems plausible that such codes would
not be able to achieve list decoding capacity. For the worst-case error model, Guruswami and Rudra showed that there
do exist concatenated codes that achieve list decoding capacity~\cite{atri-venkat-soda-08}.
However, for erasures there is an additional potential complication that does not arise for the
worst-case error case. In particular, consider erasure patterns in which $\p$ fraction of the outer symbols are
completely erased. It is clear by this example that the outer code needs to have rate very close to $1-\p$. However,
note that to approach list decoding capacity for erasures, the concatenated code needs to have
rate $1-\p-\eps$. This means that the inner codes need to have rate very close to $1$.
By contrast, even though the result of~\cite{atri-venkat-soda-08} has some restrictions on the rate
of the inner codes, it is not nearly as stringent as the requirement above. 
(The restriction in~\cite{atri-venkat-soda-08} seems to be an artifact of the proof, whereas
for erasures, the restriction is unavoidable.)
Further, this restriction
on the inner rate is just by looking at a specific class of erasure patterns. It is reasonable to
wonder if when taking into account all possible erasure patterns, we can rule out the possibility
of concatenated codes achieving the list decoding capacity for erasures.

In our second main result, we show that concatenated codes \textit{can} achieve the list decoding
capacity for erasures. In fact, we show that choosing the outer code to be a Folded Reed-Solomon
code (\cite{GR-capacity}) and picking the inner codes to be random independent linear codes with rate $1$, will with high
probability, result in a linear code that achieves the list decoding capacity for erasures. We
show a similar result (but with better bounds on the list size) when the outer code is also chosen
to be a random linear code. Both of these ensembles were shown to achieve the list decoding
capacity for errors in~\cite{atri-venkat-soda-08}, although, as mentioned earlier, the result for errors
holds for a superset of concatenated codes (as the inner codes could have rates strictly less
than $1$). The proof of our result is similar to the proof structure in~\cite{atri-venkat-soda-08}.
Because we are dealing with the more benign erasure noise model, some of the calculations in
our proofs are much simpler than the corresponding ones in~\cite{atri-venkat-soda-08}.

\paragraph{Approximating NP Witnesses.} We conclude this section by pointing out that an application of
binary codes that are list decodable from erasures is to the problem of approximating NP-witnesses~\cite{GHLP,KS}.
For any NP-language $L$, we have a polynomial-time decidable relation $R_L(\cdot,\cdot)$ such that
$x\in L$ if and only if there exists a polynomially sized witness $w$ such that $R_L(x,w)$ accepts.
Thus, for an NP-complete language we do not expect to be able to compute the witness $w$ in
polynomial time
given $x$. A natural notion of approximation is the following: given an $\eps$ fraction of the bits
in a  
a correct witness $w$, can we verify if $x\in L$ in polynomial time? The results in
\cite{GHLP,KS} show that such an approximation is not possible unless P$=$NP.

To be more precise, G\'{a}l et al. (\cite{GHLP}) consider the following problem: given a SAT formula $\phi$ over
$n$ variables can we, in polynomial time, compute another SAT formula $\phi'$ over
$N=\mathrm{poly}(n)$ variables such that given $\eps N$ bits from a satisfying assignment to
$\phi'$, we can compute a satisfying assignment to the original formula $\phi$?

Kumar and Sivakumar's (\cite{KS}) reduction works for any NP-language
$L$. However, their reduction computes a polynomial-time computable relation $R'_L$ (with
witness size $N=\mathrm{poly}(n)$),
which is different from the original  predicate $R_L$ such that the knowledge of
$\eps N$ many bits of some satisfying witness for $R'_L$ can be used in polynomial time to compute
a satisfying witness for $R'_L$. Both of these results are proven by picking a
linear binary code $C$ that can be list decoded from a $1-\eps$ fraction of erasures and ``encoding"
$C(x)$ (where $x$ is the input) into the definition of $\phi'$ (in the case of~\cite{GHLP})
 or $R'_L$ (in the case of~\cite{KS}). The intuition behind these reductions
is that given sufficiently many bits
of a satisfying witness, we can obtain a list of potentially satisfying witnesses by running the list
decoding algorithm for $C$ to recover from the erasures. (The connection to list decoding was
implicit in~\cite{GHLP}-- it was made explicit in~\cite{KS}.)

Guruswami and Sudan (\cite{GS2000}) show that the reductions above can be made to work with $\eps=N^{-1/2+\g}$ for
the Kumar and Sivakumar problem and with $\eps=N^{-1/4+\g}$ for the G\'{a}l et al. problem (for any 
constant $\g>0$). An explicit linear code that meets the list decoding capacity for erasures will
improve the value of $\eps$ above to $N^{-1+\g}$ and $N^{-1/2+\g}$, respectively.

\paragraph{Organization of the Paper.} We begin with some preliminaries in Section~\ref{sec:prelims}.
We present our first main result on random codes in Section~\ref{sec:randerrors} and our
second main result on erasures in Section~\ref{sec:concatcodes}.

\section{Preliminaries}
\label{sec:prelims}
For an integer $m\ge 1$, we will use $[m]$ to denote the set $\{1,\dots,m\}$.

\paragraph{Basic Coding Definitions.}
A code $C$ of {\em dimension} $k$ and {\em block length} $n$ over an
alphabet $\Sigma$ is a subset of $\Sigma^n$ of size $|\Sigma|^k$. The
{\em rate} of such a code equals $k/n$.  Each $n$-tuple in $C$ is called a
codeword. 
%In this paper, we will focus on the case when $\Sigma$ is a
%finite field. 
Let $\F_q$ denote the field with $q$
elements. A code $C$ over
$\F_q$ is called a linear code if $C$ is a subspace of $\F_q^n$. In
this case the dimension of the code coincides with the dimension of
$C$ as a vector space over $\F_q$. By abuse of notation we can also
think of a linear code $C$ as a map from an element in $\F_q^k$ to its
corresponding codeword in $\F_q^n$, mapping a row vector $\vx \in \F_q^k$ to a vector
$\vx \vg \in \F_q^n$ via a $k\times n$ matrix $\vg$ over $\F_q$ which
is referred to as the generator matrix.

The Hamming distance between two vectors in $\vx,\vy\in\Sigma^n$, denoted by $\Delta(\vx,\vy)$,
 is the number
of places they differ in. The (minimum) distance of a code $C$ is the
minimum Hamming distance between any two distinct codewords
from $C$. The relative distance is the ratio of the distance to the
block length.

We will need the following notions of the weight of a vector.  Given a
vector $\myvec{v}\in\{0,1,\dots,q-1\}^{n}$, its Hamming weight, which is the
number of non-zero entries in the vector, is denoted by
$\wt(\myvec{v})$.  Given a vector $\myvec{y}=(y_1,\dots,y_n)\in
\{0,\dots,q-1\}^n$ and a subset $S\subseteq [n]$, $\vy_S$ will denote
the subvector $(y_i)_{i\in S}$, and 
$\wt_S(\myvec{y})$ will denote the Hamming weight of $\vy_S$.

%Note that $wt(\myvec{y})=wt_{[N]}(\myvec{y})$.

\paragraph{Code Concatenation.}
Concatenated codes are constructed from two different types of codes
that are defined over alphabets of different sizes.  If we are
interested in a concatenated code over $\F_q$,
then the \textit{outer code}
$\cout$ is defined over $\F_Q$, where $Q=q^k$ for some positive
integer $k$, and has block length $N$. The second type of codes, called
the \textit{inner codes}, and which are denoted by $\cin^1,\dots,\cin^N$,
are defined over $\F_q$ and are each of dimension $k$ (note that the
message space of $\cin^i$ for all $i$ and the alphabet of $\cout$ have
the same size).  The concatenated code, denoted by
$C=\cout\circ(\cin^1,\dots,\cin^N)$, is defined as follows: Let the
rate of $\cout$ be $R$ and let the block lengths of 
$\cin^i$ be $n$ (for $1\le i\le N$). Define
$K=RN$ and $r=k/n$. The input to $C$ is a vector
$\vm=\angles{m_1,\dots,m_K}\in (\F_q^k)^K$. Let
$\cout(\vm)=\angles{x_1,\dots,x_N}$.  The codeword in $C$
corresponding to $\vm$ is defined as follows
\[C(\vm)=\angles{\cin^1(x_1),\cin^2(x_2),\dots,\cin^N(x_N)}.\]

The outer code $\cout$ in this paper will either be a random linear code over $\F_Q$
or the folded Reed-Solomon code from~\cite{GR-capacity}.  In the case
when $\cout$ is random linear, we will pick $\cout$ by selecting $K=RN$
vectors uniformly at random from $\F_Q^N$ to form the rows of the
generator matrix. For every position $1\le i\le N$, we will choose an
inner code $\cin^i$ to be a random linear code over $\F_q$ of block
length $n$ and rate $r=k/n$. In particular, we will work with the
corresponding generator matrices $\myvec{G}_i$, where every
$\myvec{G}_i$ is a random $k\times n$ matrix over $\F_q$. All the
generator matrices $\myvec{G}_i$ (as well as the generator matrix for
$\cout$, when we choose a random $\cout$) are chosen independently.
This fact will be used crucially in our proofs.

\paragraph{List Decoding.} We define some terms related to list decoding.
\begin{definition}[List decodable code for errors]
For $0 < \rho <1$ and an integer $L \ge 1$, a code $C \subseteq
\Sigma^n$ is said to be $(\rho,L)$-list decodable if for every
$\vy\in\Sigma^n$, the number of codewords in $C$ that are within Hamming
distance $\rho n$ from $y$ is at most $L$.
\end{definition}

Given a vector $\vc=(c_1,\dots,c_n)\in\Sigma^n$ and an erased received word $\vy=(y_1,\dots,y_n)\in(\Sigma\cup\{?\})^n$,\footnote{$?$ denotes an erasure.} we will use
$\vc\simeq \vy$ to denote the fact that for every $i\in [n]$ such that $y_i\neq ?$, $c_i=y_i$. With this
definition, we are ready to define the notion of list decodability for erasures. Further, for an erased received
word, we will use $\wt(\vy)$ to denote the number of erased positions.

\begin{definition}[List decodable code for erasures]
For $0 < \rho <1$ and an integer $L \ge 1$, a code $C \subseteq
\Sigma^n$ is said to be $(\rho,L)_{led}$-list decodable if for every
$\vy\in(\Sigma\cup\{?\})^n$ with $\wt(\vy)\le \p n$, the number of codewords $\vc\in C$
such that $\vc\simeq \vy$ is at most $L$.
\end{definition}

\paragraph{Reed-Solomon and Related Codes.}
The classical family of Reed-Solomon (RS) codes over a field $\F$ are
defined to be the evaluations of low-degree polynomials at a sequence
of distinct points of $\F$. Folded Reed-Solomon codes are obtained by
viewing the RS code as a code over a larger alphabet $\F^s$ by
bundling together $s$ consecutive symbols for some folding parameter
$s$.  
We will not need any specifics of folded RS codes (in fact, even their
definition) beyond certain properties that we recall in Section~\ref{sec:concatcodes}.

\section{Random Errors}
\label{sec:randerrors}

In this section we consider the random noise model mentioned in the introduction:
the error locations are adversarial but the errors themselves are random.
Our main result is the following.

\begin{theorem}
\label{thm:dist-list-one}
Let $0<\eps,\delta<1$ be reals and let $q$ and $n\ge \Omega(1/\eps)$ be positive integers.
Let $\Sigma=\{0,1,\dots,q-1\}$.\footnote{We will assume that $\Sigma$ is equipped with a
monoid structure,
i.e. for any $a,b\in\Sigma$, $a+b\in \Sigma$ and $0$ is the identity element.} Let $0<\p\le\delta-\eps$ be a real.
Let $C$ be a code over $\Sigma$ of block length $n$ and
relative distance $\delta$.
Let
$S\subseteq [n]$ with $|S|= (1-\p)n$. Then the following hold:
\begin{itemize}
\item[(a)]
If $q\ge 2^{\Omega(1/\eps)}$, then
for every codeword $\vc$ and all but a $q^{-\Omega(\eps n)}$ fraction of
error patterns $\ve\in \Sigma^n$ with $\wt(\ve)=\p n$ and $\wt_S(\ve)=0$, the
only codeword within the Hamming ball of radius $\p n$ around the
received word $\vc+\ve$ is $\vc$.
\item[(b)] Let $\g>0$.
If $q> \max\left(n,\left(\frac{e}{1-\delta+\eps}\right)^{\left\lceil \frac{1}{\g}\right\rceil}\right)$, then
for every codeword $\vc$ and all but a $(q-1)^{-((1-\g)\eps/2-(1-\delta)\g)n}$ fraction of
error patterns $\ve\in \Sigma^n$ with $\wt(\ve)=\p n$ and $\wt_S(\ve)=0$, the
only codeword within the Hamming ball of radius $(\delta-\eps) n$ around the
received word $\vc+\ve$ is $\vc$.
\end{itemize}
\end{theorem}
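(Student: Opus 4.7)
My plan is to prove both parts by a union bound over $\vc'\in C\setminus\{\vc\}$: for each such $\vc'$, I reduce ``$\vc'$ lies in the prescribed Hamming ball around $\vc+\ve$'' to a binomial tail event in the random error $\ve$, then combine the per-codeword estimate with the Singleton bound $|C|\le q^{(1-\delta)n+1}$.

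Fix $\vc'$, and set $a=|\{i\in S: c_i\neq c'_i\}|$ and $m=|\{i\in T : c_i\neq c'_i\}|$, so that $a+m = \Delta(\vc,\vc')\ge \delta n$. Since $m\le |T|=\rho n$ and $\rho\le \delta-\eps$, we obtain $a\ge \eps n$. A direct calculation shows that
\[
\Delta(\vc+\ve,\vc')\;=\;a+\rho n - k,\qquad k\;=\;|\{i\in T: c_i\ne c'_i,\; c_i+e_i = c'_i\}|.
\]
Under the error model, $\ve|_T$ is uniform on $(\Sigma\setminus\{0\})^T$ and independent across coordinates, so for each $i$ with $c_i\ne c'_i$ the matching event has probability $1/(q-1)$; hence $k\sim \mathrm{Bin}(m,1/(q-1))$. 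The ``$\vc'$ is bad'' event is $k\ge a$ in part (a), and $k\ge a^*:= a-(\delta-\eps-\rho)n$ in part (b); in both cases the threshold is $\ge \eps n$. A standard binomial tail bound then yields
\[
\Pr_\ve[\vc'\text{ is bad}]\;\le\;\binom{m}{a^*}(q-1)^{-a^*}\;\le\;\left(\tfrac{em}{a^*(q-1)}\right)^{a^*},
\]
which, using $m/a^*\le \rho/\eps\le 1/\eps$, is at most $(e/(\eps(q-1)))^{\eps n}$.

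To finish I union bound over $\vc'\neq \vc$ via $|C|\le q^{(1-\delta)n+1}$. For part (a), choosing the hidden constant in $q\ge 2^{\Omega(1/\eps)}$ sufficiently large (as a function of $1-\delta$) ensures the per-codeword factor shrinks in $q$ strictly faster than $q^{(1-\delta)n}$ grows, producing a total bad fraction of $q^{-\Omega(\eps n)}$. For part (b), the parameter $\gamma$ serves as an explicit trade-off: writing $\binom{m}{a^*}\le (em/a^*)^{a^*}$ and splitting the $(q-1)^{-a^*}$ factor into $(1-\gamma)a^*$ and $\gamma a^*$ parts, the hypothesis $q > (e/(1-\delta+\eps))^{\lceil 1/\gamma\rceil}$ is exactly what absorbs the Stirling factor $(em/a^*)^{a^*}$ against $(q-1)^{\gamma a^*}$, while $q>n$ kills the subexponential corrections from the union over the at most $n$ possible values of $d=\Delta(\vc,\vc')$. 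Optimising the resulting inequality yields the stated exponent $(1-\gamma)\eps/2-(1-\delta)\gamma$.

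The main obstacle is the quantitative balancing step: the Singleton factor $q^{(1-\delta)n}$ can be very large (especially when $\delta$ is far from $1$), so the per-codeword tail must shrink in $q$ at just the right rate to absorb it. This is exactly why the lower bound on $q$ must be exponential in $1/\eps$ (with the implicit constant scaling with $1-\delta$), and why part (b) introduces the auxiliary parameter $\gamma$ to make the trade-off explicit. The remaining steps — deriving the combinatorial identity for $\Delta(\vc+\ve,\vc')$ and the binomial tail bound — are routine bookkeeping once this balance is understood.
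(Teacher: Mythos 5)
Your approach has a genuine gap in the final union bound, and it is not one that can be papered over by tuning the constant in $q\ge 2^{\Omega(1/\eps)}$. Your per-codeword estimate is of the form $\binom{m}{a}(q-1)^{-a}\le (e/(\eps(q-1)))^{\eps n}\approx q^{-\eps n}$ (for large $q$), and you then multiply by $|C|\le q^{(1-\delta)n+1}$. The product behaves like $q^{(1-\delta-\eps)n}$: its exponent has a sign that does not depend on how large $q$ is. Since the theorem only assumes $\delta\ge\rho+\eps$ (so $\delta$ can be close to $\eps$), the quantity $1-\delta-\eps$ is typically positive, and the union bound diverges rather than yielding $q^{-\Omega(\eps n)}$. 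Concretely, for a Reed--Solomon code with $\delta=1/2$, $\rho=0.4$, $\eps=0.1$, you get $q^{0.5n}\cdot q^{-0.1n}\to\infty$ no matter how large $q$ is. The claim that ``the per-codeword factor shrinks in $q$ strictly faster than $q^{(1-\delta)n}$ grows'' is incorrect: both scale exponentially in $n\log q$, and the exponent $1-\delta-\eps$ does not improve with $q$.

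The paper avoids this by not union-bounding over codewords at all. It instead counts bad error patterns grouped by the \emph{agreement set} $A$ between $\vc'$ and $\vc+\ve$, split as $(S_0,S_1)=(A\cap S,\,A\setminus S)$, and then uses the local rigidity of a distance-$\delta n$ code: once any $(1-\delta)n+1$ coordinates of a codeword are pinned down, the codeword is unique. Thus for each choice of $(S_0,S_1)$ and each choice of error values outside $S\cup S_1$, there are at most $(q-1)^{(1-\delta-\beta)n+1}$ compatible codewords (hence bad error patterns), not $q^{(1-\delta)n+1}$. Summed up this yields a fraction $\le 2^n(q-1)^{-\eps n+1}$, which is where $q\ge 2^{\Omega(1/\eps)}$ genuinely suffices. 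In short: the distance must be used locally (to control the count of codewords consistent with a small set of fixed coordinates), not just globally through the Singleton bound on $|C|$. Your reduction of the per-codeword event to a binomial tail and the observation $a\ge\eps n$ are both correct and align with the paper's mechanics; the missing ingredient is replacing ``union bound over $\vc'$'' with a count over agreement patterns that exploits this local determination.
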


\iffalse
As a Reed-Solomon code of rate $R$ has relative distance at least $1-R$,
part (a) of
Theorem~\ref{thm:dist-list-one} implies Theorem~\ref{thm:RS-list-one}.\footnote{
Theorem~\ref{thm:RS-list-one} also includes error patterns with sub-linear
Hamming weight while Theorem~\ref{thm:dist-list-one} only considers error
patterns of linear Hamming weight. However, this is not an issue since
the number of error patterns with sub-linear Hamming
weight is an exponentially small
fraction of error patterns with linear Hamming weight.}
Further, Theorem~\ref{thm:dist-list-one} states that one can deal with strictly
less than a $\delta$ fraction of errors even if the error locations are
\textit{arbitrary} and the errors themselves are random. 
\fi

A weaker version of Theorem~\ref{thm:dist-list-one} was previously known for RS codes~\cite{mceliece}.
(Though the bounds for part (b) are better in~\cite{mceliece}.) In particular, McEliece showed Theorem~\ref{thm:dist-list-one}
for RS codes but over \textit{all} error patterns of Hamming weight $\p n$. In other words, Theorem~\ref{thm:dist-list-one}
implies the result in~\cite{mceliece} if we average our result over all subsets $S\subseteq [n]$ with
$|S|=\p n$.

Part (a) of Theorem~\ref{thm:dist-list-one} implies that for $e\le (\delta-\eps)n$ random errors,
with high probability, the Hamming ball of radius $e$ has one codeword in it. Note that this
is \textit{twice} the number of errors for which an analogous result can be shown
for worst-case errors.
Part (b) of Theorem~\ref{thm:dist-list-one} implies the following property
of Reed-Solomon codes (where we pick $\eps=4R$ and $\g=1/2$).

% which we
%used in our analysis of the tolerant tester for RS codes.

%\arnote{not in this paper or any paper that we can refer to (starting with 'which we used...')}

\begin{corollary}
\label{cor:rs-avg-list-one}
Let $k\le n< q$ be integers such that $q>\left(\frac{n}{k}\right)^2$. Then
the following property holds for Reed-Solomon codes of dimension $k$ and
block length $n$ over $\F_q$. For at least $1-q^{-\Omega(k)}$ fraction of
error patterns $\ve$ of Hamming weight {\em at most} $n-4k$ and any codeword
$\vc$, the only codeword that agrees in at least $4k$ positions with
$\vc+\ve$ is $\vc$.
\end{corollary}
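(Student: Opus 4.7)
The plan is to instantiate Theorem~\ref{thm:dist-list-one}(b) on Reed-Solomon codes and then repackage the conclusion. A Reed-Solomon code of dimension $k$ and block length $n$ over $\F_q$ has relative distance $\delta = (n-k+1)/n$, so $1-\delta \le R := k/n$. I would apply the theorem with $\gamma = 1/2$ and $\eps$ a small constant multiple of $R$, tuned so that two translations line up: the error-weight constraint $\rho \le \delta - \eps$ should correspond to $\wt(\ve) \le n - 4k$, and the Hamming ball of radius $(\delta - \eps)n$ should correspond to codewords agreeing with $\vc+\ve$ in at least $4k$ positions. Concretely, $\eps = (3k+1)/n$ satisfies both: then $(\delta-\eps)n = n-4k$ pins down the error-weight bound and the agreement threshold simultaneously.

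With these choices I would verify the hypotheses of Theorem~\ref{thm:dist-list-one}(b). One computes $1-\delta+\eps = 4k/n$, so $(e/(1-\delta+\eps))^{\lceil 1/\gamma \rceil} = (en/(4k))^2 = O((n/k)^2)$; together with $q > (n/k)^2$ and $q > n$, this meets the theorem's requirement on $q$. The bad-fraction exponent is $((1-\gamma)\eps/2 - (1-\delta)\gamma)n = (\eps/4 - (1-\delta)/2)n = \Omega(Rn) = \Omega(k)$, so for each fixed support $S$ and each weight $w \le n - 4k$, the fraction of bad error patterns of that weight and support is at most $(q-1)^{-\Omega(k)}$.

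The theorem fixes a set $S \subseteq [n]$ of size $(1-\rho)n$ and controls error patterns of weight \emph{exactly} $\rho n$ whose support equals $[n]\setminus S$, whereas the corollary considers all error patterns of weight \emph{at most} $n-4k$ with no support restriction. To bridge the gap I would sum over supports and weights: for each $w \in \{0, 1, \dots, n-4k\}$ and each $T \subseteq [n]$ with $|T| = w$, applying the theorem with $S = [n]\setminus T$ and $\rho = w/n$ bounds the number of bad $\ve$ with support exactly $T$ by $(q-1)^{-\Omega(k)}(q-1)^w$, out of $(q-1)^w$ patterns with that support. Aggregating these counts against the total $\sum_w \binom{n}{w}(q-1)^w$ of all patterns of weight at most $n-4k$, the common combinatorial factor $\binom{n}{w}(q-1)^w$ cancels weight by weight, so the overall bad fraction remains $q^{-\Omega(k)}$.

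The step I expect to be most delicate is the initial choice of $\eps$: requiring both the error-weight cutoff $n-4k$ and the agreement threshold $4k$ to line up with the theorem essentially forces $\eps = (3k+1)/n$ rather than the round $\eps = 4R$ suggested by the preamble. I would want to verify that this $\eps$ still exceeds $2(1-\delta) \approx 2R$, thereby keeping the bad-fraction exponent positive and $\Omega(k)$. Once this is in place, translating ``only codeword in the Hamming ball of radius $n-4k$ around $\vc+\ve$ is $\vc$'' into ``only codeword agreeing with $\vc+\ve$ in at least $4k$ positions is $\vc$'' is immediate, completing the proof.
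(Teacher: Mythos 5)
Your approach is the paper's own: instantiate Theorem~\ref{thm:dist-list-one}(b) for Reed--Solomon codes and then sum over error supports and weights. The paper gives no detailed argument, only the parenthetical ``where we pick $\eps=4R$ and $\gamma=1/2$,'' and your more careful bookkeeping exposes what looks like a small slip there: with $\delta=(n-k+1)/n$ and $\eps=4R$ the radius $(\delta-\eps)n$ works out to $n-5k+1$, giving agreement threshold $\approx 5k$, not $4k$; your choice $\eps=(3k+1)/n$ (equivalently $\eps\approx 3R$ if one takes $\delta=1-R$) is the one that actually lines up the error-weight cutoff and agreement threshold at $n-4k$ and $4k$, and it still satisfies $q>\left(\frac{en}{4k}\right)^2$ since $e<4$ and keeps the exponent $\bigl((1-\gamma)\eps/2-(1-\delta)\gamma\bigr)n=\Omega(k)$. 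Your bridging step --- for each weight $w\le n-4k$ and each support $T$ of size $w$ apply the theorem with $S=[n]\setminus T$, $\rho=w/n$, and note that the factor $\binom{n}{w}(q-1)^w$ cancels weight by weight so the aggregate bad fraction stays $q^{-\Omega(k)}$ --- is exactly what is needed and is left implicit in the paper. The one hypothesis you should add a word about is that Theorem~\ref{thm:dist-list-one}(b)'s proof needs $n\ge 4/((1-\gamma)\eps)$, which with your $\eps$ becomes a mild lower bound on $k$; this is harmless since for $k=O(1)$ the conclusion $q^{-\Omega(k)}$ is vacuous anyway, but it is worth flagging.
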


We would like to point out that in Corollary~\ref{cor:rs-avg-list-one}, the radius of the Hamming
ball can be \textit{larger} than the number of errors. This can be used to slightly improve 
upon the best known algorithms to decode RS codes from random errors beyond the Johnson
bound for super-polynomially large $q$. See Section~\ref{sec:random-algo} for 
more details.

A natural question is whether the lower bound of $q\ge 2^{\Omega(1/\eps)}$ in part (a) of Theorem
\ref{thm:dist-list-one} can be improved. In Section~\ref{sec:alphabet-lb} we show that this is
not possible.

\paragraph{Proof of Theorem~\ref{thm:dist-list-one}.}
Let $\vc\in C$ be the transmitted codeword. For an $\a\ge 1-\delta+\eps$,
we call an error pattern
$\ve$ (with $\wt(\ve)=\p n$ and $\wt_S(\ve)=0$) $\a$-\textit{bad} if there exits 
a codeword
$\vc'\neq\vc\in C$ such that $\Delta(\vc+\ve,\vc')= (1-\a)n$ (and every
other codeword has a larger Hamming distance from $\vc+\ve$). We will 
show that the number of $\a$-bad error patterns (over all
$\a\ge 1-\delta+\eps$) is an exponentially
small fraction of error patterns $\ve$ with $\wt(\ve)=\p n$ and $\wt_S(\ve)=0$,
which will prove the theorem.

Fix $\a\ge 1-\delta+\eps$. Associate every $\a$-bad error pattern $\ve$ with the 
lexicographically first codeword $\vc'\neq\vc\in C$ such that $\Delta(\vc+\ve,\vc')=(1-\a)n$. 
Let $A\subseteq [n]$ be the set of positions where
$\vc'$ and $\vc+\ve$ agree. Further, define $S_0=S\cap A$, $S_1=A\cap ([n]\setminus S)$
and $\b=|S_0|/n$.
Thus, for every $\a$-bad error pattern $\ve$, we can associate such a pair of
subsets $(S_0,S_1)\subseteq S\times ([n]\setminus S)$. Hence, to count the number of $\a$-bad
error patterns it suffices to count for each possible pair $(S_0,S_1)$,
with $|S_0|=\b n$ and $|S_1|=(\a -\b)n$ for some $\a-\p\le \b\le \a$, the
number of $\a$-bad patterns that can be associated with it. (The lower and upper bounds on
$\b$ follow from the fact that $S_1\subseteq [n]\setminus S$ and $S_0\subseteq A$,
respectively.)

Fix sets $S_0\subseteq S$ and $S_1\subseteq [n]\setminus S$
with $|S_0|=\b n$ and $|S_1|=(\a-\b)n$ for some $\a-\p\le \b\le \a$. 
To upper bound the number of $\a$-bad 
error patterns that are associated with $(S_0,S_1)$, first note that such
error patterns take all the $(q-1)^{(\p-\a+\b)n}$ possible
values at the positions in $[n]\setminus(S\cup S_1)$. Fix a vector
$\vx$ of length $n-|S|-|S_1|$ and consider all the $\alpha$-bad
error patterns $\ve$
such that $\ve_{[n]\setminus (S\cup S_1)}=\vx$.
Recall that each error pattern is associated with a codeword $\vc'\neq \vc$
such that $\vc'$ and $\vc+\ve$ agree exactly in the positions $S_0\cup S_1$.
Further, such a codeword $\vc'$ is associated with exactly one $\a$-bad 
error pattern $\ve$, where $\ve_{[n]\setminus (S\cup S_1)}=\vx$. (This is
because fixing $\vc'$ fixes $\ve_{S_1}$ and $\ve_S$ is already fixed by
the definition of $S$.) Thus, to upper bound the number of $\a$-bad 
error patterns associated with $(S_0,S_1)$, where $\ve_{[n]\setminus (S\cup S_1)}=\vx$ (call this number $N_{\a,S_0,S_1,\vx}$),
we will upper bound the number of such codewords $\vc'$. Note that as $C$ has
relative distance $\delta n$, once any  $(1-\delta)n+1$ positions are fixed,
there is at most one codeword that agrees with the fixed positions (if there
is no such codeword then the corresponding ``error pattern" does not exist).
Thus, there is at most one possible
$\vc'$  once we fix (say) the ``first" $(1-\delta)n+1-|S_0|$
values of $\ve_{S_1}$ (recall that $\vc'_{S_0}=\vc_{S_0})$. This implies that
\[N_{\a,S_0,S_1,\vx}\le (q-1)^{(1-\delta-\b)n+1}.\]
Let $M_{\a}$ be the  number of choices for $(S_0,S_1)$, which is just the number of
choices for $A$. As
the number of choices for $\vx$ is $(q-1)^{(\p-\a+\b)n}$,  the
number of $\a$-bad error patterns is at most
\begin{equation}
\label{eq:num-bad-pattern}
M_{\a}\cdot (q-1)^{(\p-\a+\b)n}\cdot (q-1)^{(1-\delta-\b)n+1}=M_{\a}\cdot(q-1)^{(1-\delta-\a)n+1}\cdot (q-1)^{\p n}.
\end{equation}

\paragraph{Proof of part(a).} Note that the number of $\a$-bad patterns for
any $\a\ge 1-\delta+\eps$ is upper bounded by
\[M_{\a}\cdot(q-1)^{-\eps n+1}\cdot (q-1)^{\p n}.\]
We trivially upper bound $M_{\a}$ by $2^n$.
Recalling that there are $(q-1)^{\p n}$ error patterns $\ve$ with $\wt(\ve)=\p n$
and $\wt_S(\ve)=0$ and that $\a$ can take at most $n$ values, the fraction
of $\a$-bad patterns (over all $\a\ge 1-\p\ge 1-\delta+\eps$) is at most
\[n2^{n}(q-1)^{-\eps n+1}\le (q-1)^{\left(-\eps +\frac{2}{\log(q-1)}+\frac{1}{n}\right)n}\le (q-1)^{-\eps n/3}\le q^{-\eps n/6},\]
where the first inequality follows from the fact that $n\le 2^n$, the
second inequality is true for $n\ge 3/\eps$ and $q\ge 2^{6/\eps}$ and the last 
inequality follows from the inequality
$(q-1)\ge \sqrt{q}$ (which in turn is true for $q\ge 3$).

\paragraph{Proof of part (b).} Note that $M_{\a}=\binom{n}{\a n}\le (e/\a)^{\a n}$. Thus, the number of $\a$-bad error patterns is upper bounded by
\[ (q-1)^{\left(1-\delta-\a+\a\cdot\frac{\log(e/\a)}{\log(q-1)}\right)n+1}\cdot (q-1)^{\p n}
\le (q-1)^{(1-\delta-\a(1-\g))n+1}\cdot (q-1)^{\p n}\le 
(q-1)^{(-(1-\g)\eps+\g(1-\delta))n+1}\cdot (q-1)^{\p n},
\]
where the inequalities follow from the facts that $q> \left(\frac{e}{1-\delta+\eps}\right)^{1/\g}$ and $\a\ge 1-\delta+\eps$.
Recalling that there are $(q-1)^{\p n}$ error patterns $\ve$ with $\wt(\ve)=\p n$
and $\wt_S(\ve)=0$ and that $\a$ can take at most $n$ values, the fraction
of $\a$-bad patterns (over all $\a\ge 1-\delta+\eps$) is at most
\[n(q-1)^{(-(1-\g)\eps+(1-\delta)\g)n+1}\le (q-1)^{\left(-(1-\g)\eps+\g(1-\delta)+\frac{2}{n}\right)n}\le 
(q-1)^{\left(-\frac{(1-\g)\eps}{2}+\g(1-\delta)\right)n},\]
where the first inequality follows from the fact that $q>n$ and the
second inequality is true for $n\ge 4/((1-\g)\eps)$.
\QED

\subsection{An Implication of Corollary~\ref{cor:rs-avg-list-one}}
\label{sec:random-algo}

To the best of our knowledge,
for $e>n-\sqrt{kn}$, the only known algorithms to decode Reed-Solomon (RS) codes
from $e$ random errors are the trivial ones:
(i) Go through all possible codewords and output the closest codeword--
this takes $2^{O(k\log{q})}\cdot n$ time and (ii) Go through all
possible $\binom{n}{e}$ error locations and check that the received word
outside the purported error locations is indeed a RS codeword-- this
takes $2^{O((n-e)\log(n/(n-e)))}\cdot O(n^2)$ time.

If $e\le n-4k$, then by Corollary~\ref{cor:rs-avg-list-one}, we can go through all
the $\binom{n}{4k}$ choices of subsets of size $4k$ and check if the received word
projected down to the subset lies in the corresponding projected down RS code.
This algorithm takes $2^{O(k\log(n/k))}\cdot O(n^2)$ time, which is better than the
trivial algorithm (ii) mentioned above for $e$ in $n-\omega(k)$.
Further, this algorithm is better than the trivial algorithm (i) when $q$ is
super-polynomially large in $n$.

\subsection{On the Alphabet Size in Theorem~\ref{thm:dist-list-one}}
\label{sec:alphabet-lb}

It is well-known that \textit{any} code that is $(\p,L)$-list decodable that also has rate
at least $1-H_q(\p)+\eps$ needs to satisfy $L=q^{\Omega(\eps n)}$ (cf.~\cite{G-thesis}).
A natural way to try to show that part (a) of Theorem~\ref{thm:dist-list-one} is false for
$q\le 2^{o(1/\eps)}$ is to look at codes whose relative distance is strictly larger than
$1-H_q(\p)$. Algebraic-geometric (AG) codes are a natural candidate since they can 
beat the Gilbert-Varshamov bound for an alphabet size of at least $49$ (cf. \cite{handbook}). The only catch is that the lower bound on
$L$ follows from an average case argument and we need to show that over \textit{most} error patterns,
the list size is more than one. For this we need an ``Inverse Markov argument," like one in
~\cite{DMS}. 

(The argument above was suggested to us by Venkat Guruswami.)

We begin with the more general statement of the ``Inverse Markov argument" from~\cite{DMS}. (We thank
Madhu Sudan for the statement and its proof.)

\begin{lemma}
\label{lem:inverse-markov}
Let $G=(L,R,E)$ be a bipartite graph with $|L|=n_L$ and $|R|=n_R$. Let the average left degree of $G$ be denoted by $\bar{d_L}$.
Note that the average right degree is $\bar{d_R}=\frac{n_L\cdot d_L}{n_R}$.
Then the following statements are true:
\begin{itemize}
\item[(i)] If we pick an edge $e=(u,v)$ uniformly at random from $E$, then the probability that\footnote{For any vertex $v$, we denote its degree by $d(v)$.}
$d(v)\le \eps \bar{d_R}$ is at most $\eps$.
\item[(ii)] If $G$ is $d$-left regular then consider the following process: Uniformly at random
pick a vertex $u\in L$. Then uniformly at random pick a vertex $v\in R$ in $u$'s neighborhood. Then the
probability that $d(v)\le \eps \frac{ d n_L}{n_R}$ is at most $\eps$.
\end{itemize}
\end{lemma}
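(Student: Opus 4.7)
The plan is to prove both statements by a direct double-counting argument (essentially Markov's inequality applied to the degree distribution viewed from the edge side). The underlying intuition is that a small set of ``low-degree'' right vertices can be incident to only a correspondingly small fraction of the edges, so conditioning on an edge-based sample cannot concentrate weight on them.

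For part (i), I would let $B = \{v \in R : d(v) \le \eps \bar{d}_R\}$ be the set of ``bad'' right vertices and count the edges hitting $B$. Each $v \in B$ contributes at most $\eps \bar{d}_R$ edges, so
\[
|\{(u,v) \in E : v \in B\}| \;\le\; |B| \cdot \eps \bar{d}_R \;\le\; n_R \cdot \eps \bar{d}_R \;=\; \eps |E|,
\]
using $|E| = n_R \bar{d}_R$. A uniformly random edge therefore lands in $B$ with probability at most $\eps$, which is exactly the claimed bound.

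For part (ii), I would observe that under $d$-left regularity, the two-step sampling (uniform $u\in L$, then uniform $v$ among its $d$ neighbors) induces the \emph{uniform} distribution on $E$: each edge $(u,v)$ is selected with probability $\tfrac{1}{n_L}\cdot\tfrac{1}{d}=\tfrac{1}{|E|}$. Hence the marginal law of the right endpoint coincides with that of the right endpoint of a uniformly random edge, and substituting $\bar{d}_R = |E|/n_R = dn_L/n_R$ into part (i) finishes the proof.

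The argument is essentially mechanical; the only minor subtlety is confirming that the two-stage process in (ii) does yield the uniform distribution on $E$, which is where $d$-regularity is used (so that the inner selection probability $1/d$ does not depend on $u$). I do not anticipate any real obstacle beyond this bookkeeping.
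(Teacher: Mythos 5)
Your proof is correct and follows exactly the paper's approach: bound the number of edges incident to low-degree right vertices by $\eps|E|$ to get (i), and observe that $d$-left-regularity makes the two-step sample uniform on $E$ so that (ii) reduces to (i).
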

\begin{proof}
We first note that (ii) follows from (i) as the random process in (ii) ends up picking edges uniformly
at random from $E$.

To conclude, we prove part (i). Consider the set $R'\subseteq R$ such that $v\in R'$ satisfies
$d(v)\le \eps \bar{d_R}$. Note that that the maximum number of edges that have an end-point in
$R'$ is at most $\eps \bar{d_R}\cdot n_R = \eps |E|$. Thus, the probability that a uniformly random
edge in $E$ has an end point in $R'$ is upper bounded by $\eps |E|/|E| =\eps$, as desired.
\end{proof}

The following is an easy consequence
of Lemma~\ref{lem:inverse-markov} and the standard probabilistic method used to prove the lower bound
for list decoding capacity. %(The proof is deferred to Appendix~\ref{app:random}.)

\begin{lemma}
\label{lem:av-bad-ball}
Let $q\ge 2$ and $0\le \p < 1-1/q$. Then the following holds for large enough $n$.
Let $C\subseteq \{0,\dots,q-1\}^n$ be a code with rate $1-H_q(\p)+\g$. Then there exists a codeword $\vc\in C$
such that for at least a $1-q^{-\Omega(\g n)}$ fraction of error patterns $\ve$ of Hamming weight at most
$\p n$,
it is true that the Hamming ball of radius $\p n$ around $\vc +\ve$ has at least two codewords from
$C$ in it.
\end{lemma}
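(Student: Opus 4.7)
The plan is to apply the Inverse Markov argument of Lemma~\ref{lem:inverse-markov} to the natural bipartite ``codeword--received word'' graph, which will give a statement about random edges that then converts to the desired existential statement by averaging.

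First I would set up the bipartite graph $G=(L,R,E)$ with $L=C$, $R=\{0,\dots,q-1\}^n$, and an edge between $\vc\in C$ and $\vy\in R$ whenever $\Delta(\vc,\vy)\le \p n$. This graph is $d$-left regular with $d=|B(\vzero,\p n)|$, the volume of the Hamming ball of radius $\p n$. Using the standard volume lower bound $d\ge q^{H_q(\p)n-o(n)}$ together with $|C|=q^{(1-H_q(\p)+\g)n}$ and $|R|=q^n$, the average right degree satisfies
\[
\bar d_R \;=\; \frac{|C|\cdot d}{|R|} \;\ge\; q^{\g n - o(n)}.
\]

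Next I would invoke part (ii) of Lemma~\ref{lem:inverse-markov} with $\eps=q^{-\g n/2}$. Note that the random process in (ii)---pick $\vc$ uniformly from $C$, then pick $\vy$ uniformly in the ball of radius $\p n$ around $\vc$---is exactly the process of picking $\vc$ uniformly from $C$ and $\ve$ uniformly from error patterns of Hamming weight at most $\p n$, and setting $\vy=\vc+\ve$. The lemma then tells us that
\[
\pr_{\vc,\ve}\!\bigl[d(\vy)\le q^{\g n/2 - o(n)}\bigr] \;\le\; q^{-\g n/2}.
\]
For $n$ large enough the threshold $q^{\g n/2-o(n)}$ is at least $2$, so with probability at least $1-q^{-\g n/2}$ over the pair $(\vc,\ve)$, the ball of radius $\p n$ around $\vy=\vc+\ve$ contains at least two codewords from $C$.

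Finally, I would apply averaging over $\vc$: since the failure probability is at most $q^{-\g n/2}=q^{-\Omega(\g n)}$ when $\vc$ and $\ve$ are drawn independently and uniformly, there must exist some fixed $\vc\in C$ for which the fraction of error patterns $\ve$ of weight at most $\p n$ with the two-codeword property is at least $1-q^{-\Omega(\g n)}$. That $\vc$ is the witness claimed by the lemma. I do not anticipate a genuine obstacle beyond bookkeeping; the only care required is to verify the volume lower bound on $d$ with enough precision to absorb the $o(n)$ terms into the exponent $\Omega(\g n)$, and to justify that the process of picking a random edge coincides with picking a uniform codeword and a uniform error pattern of weight at most $\p n$.
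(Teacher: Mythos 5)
Your proposal is correct and follows essentially the same approach as the paper's proof: the same codeword--received-word bipartite graph, the same volume lower bound on the left degree, and an application of part (ii) of Lemma~\ref{lem:inverse-markov} followed by averaging over codewords. The only cosmetic difference is your choice of $\eps = q^{-\g n/2}$ rather than the paper's $\eps = (\bar{d}_R)^{-1}$, both of which work since each makes the degree threshold at least~$1$.
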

\begin{proof}
Define the bipartite graph $G_{C,\p}=(C,\{0,\dots,q-1\}^n,E)$
as follows. For every $\vc\in C$, add $(\vc,\vy)\in E$ such that $\Delta(\vc,\vy)\le \p n$. Note that
$G_{C,\p}$ is a $\vol{q}(\p n)$-left regular bipartite graph, where 
$\vol{q}(r)$ is the volume of the $q$-ary Hamming ball with radius $r$. Note that the graph has an
average right degree of
\[\bar{d_R}= \frac{\vol{q}(\p n) \cdot q^{(1-H_q(\p)+\g)n}}{q^n} \ge q^{\g n-o(n)},\]
where in the above we have used the following well known inequality (cf.~\cite{MS}):
\[\vol{q}(\p n)\ge q^{H_q(\p )n-o(n)}.\]
Thus, by part (b) of Lemma~\ref{lem:inverse-markov} (with $\eps=(\bar{d}_R)^{-1}\le q^{-\g n+o(n)}$), we have
\[\pr_{\vc\in C}\pr_{\substack{\ve\in \{0,\dots,q-1\}^n\\\wt(\ve)\le \p n}}\left[ \vc+\ve\text{ has at most one  codeword within Hamming distance } \p n\right] \le q^{-\g n +o(n)}.\]
Thus, there must exist at least one codeword $\vc\in C$ with the required property.
\end{proof}

Thus, given Lemma~\ref{lem:av-bad-ball},  we can prove that part (a) of Theorem~\ref{thm:dist-list-one}
is not true for a certain value of $q$ if there exists a code $C\subseteq \{0,\dots,q-1\}^n$ with relative
distance $\delta$ such that it has rate at least $1-H_q(\delta -\eps)+\g$ for some $\g >0$. Now it
is known that for fixed $\alpha >0$, $H_q(\alpha) \ge \alpha + \Omega\left(\frac{1}{\log{q}}\right)$
(cf.~\cite[Lecture 7]{sudan-notes}).
Thus, we would be done if we could find a code with relative distance $\delta$ and rate at least
\[1-\delta+\eps +\g -O(1/\log{q}).\]
For $q\le 2^{o(1/\eps)}$, the bound above for small enough $\eps$ is upper bounded by 
$1-\delta -\eps-\frac{1}{\sqrt{q}-1}$
(assuming that $\g=\Theta(\eps)$).
It is known that AG codes over alphabets of size $\ge 49$ with relative distance $\delta$ exist that achieve a rate of $1-\delta-\frac{1}{\sqrt{q}-1}$. Thus, for $49\le q\le 2^{o(1/\eps)}$, AG codes over alphabets of size
$q$ are the required codes.

\section{Concatenated Codes}
\label{sec:concatcodes}

This section first shows that with folded Reed-Solomon codes and
independently chosen small random linear inner codes, the resulting
concatenated code can achieve erasure capacity in a list decoding setting.
A similar result holds when the outer code is a random linear code,
and this result is presented second. 
%(The proofs are deferred to Appendix~\ref{app:concat}.)

\subsection{Folded Reed-Solomon Outer Code}
\begin{theorem}
\label{thm:frs-concat}
Let $q$ be a prime power and let $0<R \le 1$ be an arbitrary rational number.
Let $n,K,N\ge 1$ be large enough integers such that $K=RN$.
% Then the following holds for large enough integers $n,N$ such that
% there exist integers $k$ and $K$ that satisfy $k=rn$ and $K=RN$.
%
Let $\cout$ be a folded Reed-Solomon code over $\F_{q^{n}}$ of block
length $N$ and rate $R$. Let $\cin^1,\dots,\cin^N$ be random linear
codes over $\F_q$, where $\cin^i$ is generated by a random $n\times n$
matrix $\vg_i$ over $\F_q$ and the random choices for
$\vg_1,\dots,\vg_N$ are all independent.\footnote{We stress that we do
  {\em not} require that the $\vg_i$'s have rank $n$.}%
Then the concatenated code $C^*=\cout\circ(\cin^1,\dots,\cin^N)$ is a
$\left(1-R-\eps,\left(\frac{N}{\eps^2}\right)^{O\left(\eps^{-2}\log(1/R)\right)}\right)_{led}$-list
decodable code with probability at least $1-q^{-\Omega(nN)}$ over the
choices of $\vg_1,\dots,\vg_N$. Further, $C^*$ has rate $R$ w.h.p.
\end{theorem}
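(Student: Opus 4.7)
The approach follows the Guruswami--Rudra framework of \cite{atri-venkat-soda-08}: reduce list decoding of $C^*$ from erasures to a list recovery problem on the outer folded Reed--Solomon code $\cout$, using the randomness of the inner generator matrices $\vg_1,\ldots,\vg_N$ to control the sizes of the induced inner lists. The erasure setting is cleaner than the worst-case error setting because the ``inner decoder'' at each outer position returns a coset of a linear subspace rather than a general set of codewords.

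First I would reformulate the problem. Fix an erased received word $\vy$ whose set $U\subseteq [nN]$ of unerased coordinates has size at least $(R+\eps)nN$, and let $U_i$ be the unerased coordinates of the $i$-th inner block. The codewords of $C^*$ consistent with $\vy$ are in bijection with outer codewords $\vc\in\cout$ satisfying $\vc(i)\in T_i:=\{x\in\F_{q^n}:(x\vg_i)|_{U_i}=\vy|_{U_i}\}$ for every $i\in[N]$. Each $T_i$ is a (possibly empty) coset of $W_i:=\ker(\vg_i^{U_i})\subseteq\F_q^n\cong\F_{q^n}$, and by linearity of $C^*$ the list size equals $|\cout\cap\prod_i W_i|$.

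Next I would apply a union bound. To bound the list size by $L$ it suffices to show that, with probability at least $1-q^{-\Omega(nN)}$, no $U$ with $|U|\ge (R+\eps)nN$ admits an $\F_q$-subspace $V'\subseteq\cout$ of dimension $m:=\lceil\log_q(L+1)\rceil$ contained in $\prod_i W_i$. For a fixed pair $(V',U)$, independence of the $\vg_i$'s yields
\[
\Pr\bigl[V'\subseteq \textstyle\prod_i W_i\bigr] = \prod_i \Pr[V'_i\subseteq W_i] \le \prod_i q^{-m_i|U_i|},\quad m_i:=\dim_{\F_q}V'_i,
\]
since each column of $\vg_i$ indexed by $U_i$ must lie in the $(n-m_i)$-dimensional subspace $(V'_i)^\perp$. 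The list recoverability of folded Reed--Solomon from \cite{GR-capacity} serves as the counting tool: $\cout$ is $(1-R-\eps',\ell,L_0)$-list recoverable with $L_0=(N\ell/\eps')^{O(\eps'^{-2}\log(1/R))}$, and choosing $\ell$ and $\eps'=\eps/2$ appropriately controls the number of $(V',U)$ pairs one needs to union bound over. Balancing yields $L=(N/\eps^2)^{O(\eps^{-2}\log(1/R))}$. The rate is $R$ with high probability because each $\vg_i$ is full rank with probability at least $1-q^{-1}$, so the dimension of $C^*$ over $\F_q$ equals $RnN$ simultaneously for all $i$ with high probability.

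The main technical obstacle is obtaining a uniform lower bound on $\sum_i m_i|U_i|$ that beats both the number of $m$-dimensional subspaces $V'\subseteq\cout$ and the number of erasure patterns $U$. The minimum distance of $\cout$ alone only forces $|\{i:m_i\ge 1\}|\ge (1-R)N+1$ for any nonzero $V'$, which is not strong enough against an adversary that concentrates $|U_i|$ on a few positions and picks $V'$ with small $m_i$ there; to close the gap one needs the full FRS list recoverability guarantee, exactly as in the main technical inequality of \cite{atri-venkat-soda-08}. The simplification here is that each inner list $T_i$ is a subspace coset whose size is an exact power of $q$, so no separate probabilistic argument about inner-code list decodability is required, and the calculation reduces to the clean bilinear bound $\prod_i q^{-m_i|U_i|}$ displayed above.
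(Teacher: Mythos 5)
Your high-level framework is sound, and you actually cast the problem in a slightly cleaner form than the paper: observing that the list is a coset of $\cout\cap\prod_i W_i$ lets you union bound over erasure patterns $U$ alone rather than over full received words $\vy$, and your per-block bound $\Pr[V'_i\subseteq W_i]\le q^{-m_i|U_i|}$ is exactly the paper's $\prod_j q^{-u_j}$ after double counting ($\sum_j u_j=\sum_j\sum_{i\in S_j}|U_i|=\sum_i m_i|U_i|$, where $S_j$ are the witness positions of an $(\vd,\F_q)$-independent basis and $m_i$ is the number of $j$ with $i\in S_j$).

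However, the counting step is where the proof actually lives, and that is precisely where your sketch has a gap. You propose to union bound over all $m$-dimensional $\F_q$-subspaces $V'\subseteq\cout$ and to use the list recoverability of folded Reed--Solomon from~\cite{GR-capacity} to control the count. Neither half of this works as stated. There are roughly $q^{m\cdot nK}=q^{mRnN}$ such subspaces, and no uniform lower bound on $\sum_i m_i|U_i|$ over all of them can beat this: for instance, taking $V'$ spanned over $\F_q$ by a single minimum-weight outer codeword $\vc$ and choosing $U$ to concentrate erasures on the support of $\vc$ gives $\sum_i m_i|U_i|$ only about $\eps nN$, far too small against $q^{RnN}$ candidates. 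Your last paragraph correctly diagnoses that distance alone is not enough, but the claim that ``the full FRS list recoverability guarantee'' closes the gap is not substantiated and does not match how the paper proceeds. The paper instead uses the two $(\vd,\F_q)$-independent-tuple lemmas imported from~\cite{atri-venkat-soda-08}: Lemma~\ref{lem:frs-indp} is a pigeonhole step that from any $(L{+}1)$-tuple of FRS codewords extracts a $J$-sub-tuple whose $d_j$'s are all at least $(1-R-\g)N$, so the union bound runs only over these ``good'' tuples; and Lemma~\ref{lem:frs-indp-count} bounds their number by $q^{NJ(J+1)}\prod_j Q^{\max(d_j-N(1-R)+1,0)}$, which for good tuples is only about $q^{NJ(J+1)+nJ(\g N+1)}$ --- vastly smaller than $q^{mRnN}$, and with an exponent that tracks the $d_j$'s exactly so it telescopes against your $q^{-\sum_j u_j}$ bound. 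These lemmas are FRS-specific and, while they are used in~\cite{atri-venkat-soda-08} to establish list recoverability of concatenated codes, they are not the same as the black-box list recoverability statement you cite, and invoking the latter does not yield the needed count. Until you either reproduce this tuple-counting machinery (or an analogous structural bound on low-$d_j$ subspaces of $\cout$) and show explicitly how it makes the union bound close, the argument has a real hole at its center.
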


To set up the proof of the theorem above,
we begin by collecting certain definitions and results from~\cite{atri-venkat-soda-08}.
The following notion of independence will be crucial.

\begin{definition}[Independent tuples]
Let $C$ be a code of block length $N$ and rate $R$  defined over $\F_{q^k}$. Let
$J\ge 1$ and $0\le d_1,\dots,d_J\le N$ be integers. Let
$\vd=\angles{d_1,\dots,d_J}$.
An ordered tuple of codewords $(\vc^1,\dots,\vc^J)$, $\vc^j\in C$ is said
to be $(\vd,\F_q)$-independent if the following holds.
$d_1=\wt(\vc^1)$ and for every $1<j\le J$, $d_j$ is the number
of positions $i$ such that $c_i^j$ is $\F_q$-independent of the vectors 
$\{c^1_i,\dots,c^{j-1}_i\}$, where $\vc^{\ell}=(c^{\ell}_1,\dots,c^{\ell}_N)$.
%Further, for any other vector $\vd'=\angles{d'_1,\dots,d'_J}$ such that the
%condition above holds, $\sum_{j=1}^J d'_j \le \sum_{j=1}^J d_j$.
\end{definition}

\noindent
Note that for any tuple of codewords $(\vc^1,\dots,\vc^J)$ there exists a
unique $\vd$ such that it is $(\vd,\F_q)$-independent. The next two results
will be crucial in the  proof of our second main result.

\begin{lemma}[\cite{atri-venkat-soda-08}]
\label{lem:frs-indp}
Let $\eps>0$ and
let $C$ be a folded Reed-Solomon code of block length $N$ and rate $0<R<1$ that is defined over
$\F_Q$, where $Q=q^k$. %Note that the minimum distance of $C$ satisfies 
For any $L$-tuple
of codewords from $C$, where $L\ge
J\cdot (N/{\eps^2})^{O\left(\eps^{-1} J\log(q/R)\right)}$,
there exists a sub-tuple of 
$J$ codewords such that 
the
$J$-tuple is $(\vd,\F_q)$-independent, where $\vd=\angles{d_1,\dots,d_J}$
with  $d_j\ge (1-R-\eps)N$, for every $1\le j\le J$.
\end{lemma}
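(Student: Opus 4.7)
The plan is to build the required $J$-subtuple greedily, one codeword at a time. I would start by picking any nonzero $\vc^1$ from the $L$-tuple; since folded Reed-Solomon codes have relative distance at least $1-R$, any nonzero codeword has weight at least $(1-R)N \ge (1-R-\eps)N$, so $d_1 \ge (1-R-\eps)N$ is automatic. Now suppose $\vc^1,\dots,\vc^{j-1}$ have already been chosen, forming a $(\vd',\F_q)$-independent $(j-1)$-tuple with every entry of $\vd'$ at least $(1-R-\eps)N$. For each position $i \in [N]$, let $V_i \subseteq \F_Q$ denote the $\F_q$-span of $\{c^1_i,\dots,c^{j-1}_i\}$; note that $|V_i| \le q^{j-1}$. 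A candidate codeword $\vc$ fails the desired independence condition at step $j$ precisely when $c_i \in V_i$ for more than $(R+\eps)N$ positions $i$.

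The key reduction is that the set of such ``bad'' codewords is the output of a list-recovery instance on $C$: the input lists are $V_1,\dots,V_N$, each of size at most $\ell = q^{j-1}$, and one is asking for the codewords that hit some element of $V_i$ on at least a $(R+\eps)$-fraction of coordinates. Folded Reed-Solomon codes admit such list recovery with output list size $(N/\eps^2)^{O(\eps^{-1}\log(\ell/R))}$ (the list-recovery guarantee that underlies the Guruswami-Rudra framework). Since $\ell \le q^{J-1}$ across all iterations, a single uniform bound of $(N/\eps^2)^{O(\eps^{-1} J \log(q/R))}$ applies at every step. Summing over the $J$ steps, the total number of codewords that can ever be disqualified is at most $J\cdot (N/\eps^2)^{O(\eps^{-1} J \log(q/R))}$, which is precisely the threshold on $L$ in the hypothesis; hence at every step a fresh codeword remains available, and the greedy process terminates with the desired $J$-subtuple.

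The step I expect to be the main obstacle is pinning down the list-recovery theorem for folded Reed-Solomon codes with the correct parameter dependence, so that its polynomial in $N$, $1/\eps$, and $\ell$ matches the claimed exponent $O(\eps^{-1}J\log(q/R))$, and verifying that the reduction is genuinely uniform in the position-dependent lists $V_i$ rather than in a single fixed ``target.'' Once the list-recovery bound is invoked with those parameters, the iterative selection itself is essentially bookkeeping: monotonicity of the bound in $j$ allows one to absorb all intermediate values of $\ell$ into the single uniform estimate, and the union bound over $J$ iterations yields the stated value of $L$.
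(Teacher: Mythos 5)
Note first that the paper you are working from does not prove this lemma: it imports it wholesale from \cite{atri-venkat-soda-08}, so there is no in-document proof to compare against. That said, your proposal reconstructs the argument of that reference essentially correctly. The central idea you identified --- that the ``bad'' candidates at step $j$ (those landing inside the $\F_q$-span $V_i$ of the previously chosen coordinates on more than an $(R+\eps)$-fraction of positions) form the output of a list-recovery instance for the folded RS code with per-coordinate input lists $V_i$ of size $\le q^{j-1}$ --- is exactly the mechanism Guruswami and Rudra use; it is the ``list recovery with respect to a subfield'' property of folded RS codes that makes the independence lemma go through, and you have cast it correctly. Your base case ($d_1 = \wt(\vc^1) \ge (1-R)N$ from the distance of FRS for nonzero $\vc^1$) is also the right observation.

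Two small points of bookkeeping. First, the phrase ``summing over the $J$ steps, the total number of codewords that can ever be disqualified'' is not quite the right picture, since the disqualified set at step $j$ is adaptive (it depends on $\vc^1,\dots,\vc^{j-1}$), so the $J$ sets do not live side by side to be summed. The cleaner statement is simply that at \emph{each} step the bad set has size at most $B = (N/\eps^2)^{O(\eps^{-1}J\log(q/R))}$ and already contains the previously selected codewords, so any $L > B$ guarantees a fresh survivor; $L \ge J\cdot B$ is then a fortiori sufficient. This is harmless for the lemma but worth tightening if you want the argument to read cleanly. Second, you were right to flag the exact parameter dependence of the FRS list-recovery bound as the delicate point; the exponent $O(\eps^{-1}\log(\ell/R))$ with $\ell \le q^{J-1}$ does collapse to $O(\eps^{-1}J\log(q/R))$ as you claim (using $J\log q + \log(1/R) \le J\log(q/R)$ for $J\ge 1$, $R<1$), but one does have to check that the folding parameter and decoding radius in the list-recovery theorem are set up so that the agreement threshold really is $R+\eps$ and not some weaker $R+\Theta(\eps)$ that would shift the constants. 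These are matters of calibration rather than gaps in the argument; the structure is sound and matches the cited proof.
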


\begin{lemma}[\cite{atri-venkat-soda-08}]
\label{lem:frs-indp-count}
Let $C$ be a folded Reed-Solomon code of block length $N$ and rate $0<R<1$ that is defined over
$\F_Q$, where $Q=q^k$. %Note that the minimum distance of $C$ satisfies 
%$D= (1-R)N$.
Let $J\ge 1$ and $0\le d_1,\dots,d_J\le N$ be integers and define $\vd=
\angles{d_1,\dots,d_J}$. Then the number of $(\vd,\F_q)$-independent
tuples in $C$ is at most 
\[ q^{NJ(J+1)}\prod_{j=1}^J Q^{\max(d_j-N(1-R)+1,0)} \ . \]
\end{lemma}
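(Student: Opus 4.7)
The plan is to induct on $j$, showing that given any $\vc^1,\ldots,\vc^{j-1}\in C$, the number of $\vc^j\in C$ extending the tuple to a $(\angles{d_1,\ldots,d_j},\F_q)$-independent one is at most $q^{N(J+1)}\cdot Q^{\max(d_j-(1-R)N+1,\,0)}$; multiplying these per-$j$ bounds over $j=1,\ldots,J$ yields the lemma. Fix the first $j-1$ codewords, and for each position $i\in [N]$ let $V_i\subseteq \F_Q$ denote the $\F_q$-span of $\{c^1_i,\ldots,c^{j-1}_i\}$, so $|V_i|\le q^{j-1}\le q^J$. By definition, an admissible $\vc^j$ is a codeword with exactly $d_j$ positions $i$ satisfying $c^j_i\notin V_i$. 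I first sum over the choice of this set $A_j\subseteq [N]$ of size $d_j$, contributing a factor of $\binom{N}{d_j}\le 2^N\le q^N$. Set $B=[N]\setminus A_j$, so $|B|=N-d_j$.

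The key structural input is that folded Reed-Solomon has relative distance at least $1-R$, so any two distinct codewords agree in at most $RN$ positions; equivalently, the values of any codeword on any $RN+1$ coordinates uniquely determine it. I pick such a ``determining set'' $P\subseteq[N]$ of size $RN+1$ greedily, filling it with positions of $B$ first: take $P\subseteq B$ when $|B|\ge RN+1$, and $P\supseteq B$ otherwise. At each position in $P\cap B$ the value of $\vc^j$ must lie in $V_i$ (at most $q^{j-1}$ choices), and at each position in $P\cap A_j$ it lies in $\F_Q$ (at most $Q$ choices). Since any legal value assignment on $P$ extends to at most one codeword, the number of valid $\vc^j$ is upper bounded by the product of these per-position choice counts.

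When $d_j\le (1-R)N-1$ we have $|B|\ge RN+1$ and $P\subseteq B$, giving at most $(q^{j-1})^{RN+1}$ assignments; combined with the $q^N$ factor from choosing $A_j$, this is at most $q^{N(J+1)}$ once $N$ is large in terms of $1/(1-R)$, matching the claim since $\max(d_j-(1-R)N+1,\,0)=0$. When $d_j\ge (1-R)N$ we have $P\supseteq B$ and $|P\cap A_j|=RN+1-(N-d_j)=d_j-(1-R)N+1$, so the assignment count is at most $(q^{j-1})^{N-d_j}\cdot Q^{d_j-(1-R)N+1}$; the $q^N$ factor for $A_j$ together with the $q^{(j-1)(N-d_j)}\le q^{JN}$ slack is absorbed into the $q^{N(J+1)}$ prefactor.

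The main obstacle is the choice of $P$: loading it with coordinates of $B$ first is what drives the $Q$-exponent down to exactly $d_j-(1-R)N+1$ rather than the trivial $RN+1$ one would get from an arbitrary determining set. Everything else --- checking $N+(j-1)(RN+1)\le N(J+1)$ in the first regime, absorbing $\binom{N}{d_j}$ into $q^N$, and iterating the per-$j$ bound --- is straightforward bookkeeping once the determining set has been identified correctly.
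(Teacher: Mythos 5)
The paper cites Lemma~\ref{lem:frs-indp-count} from~\cite{atri-venkat-soda-08} without reproducing its proof, so there is no in-paper argument to compare against; judged on its own terms, your proof is correct and follows what is essentially the only natural counting route. You peel off $\vc^j$ one at a time, pay $\binom{N}{d_j}\le q^N$ to fix the set $A_j$ of positions where $c^j_i$ leaves the span $V_i$ (with $|V_i|\le q^{j-1}$), and then use that the FRS distance $\ge(1-R)N$ makes any $RN+1$ coordinates a determining set. Loading that determining set greedily with coordinates of $B=[N]\setminus A_j$ is exactly the right move: it caps the number of unrestricted $\F_Q$-coordinates at $\max(d_j-(1-R)N+1,0)$, and the remaining $(q^{j-1})^{|P\cap B|}\le q^{JN}$ factor is absorbed into the $q^{N(J+1)}$ prefactor. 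Multiplying over $j$ gives the stated bound.

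One small point worth flagging: in the regime $d_j\le(1-R)N-1$ you need $q^N\cdot(q^{j-1})^{RN+1}\le q^{N(J+1)}$, i.e.\ $(j-1)(RN+1)\le NJ$. You justify this by requiring $N$ large relative to $1/(1-R)$, but a cleaner sufficient condition is simply $N\ge J-1$ (use $j-1\le J-1$ and $RN+1\le N+1$), which is independent of $R$. Either way there is an implicit mild lower bound on $N$ that the lemma statement does not make explicit; this is harmless in context, since the lemma is always invoked with $N\to\infty$, but it is worth stating.
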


Given the outer code $\cout$ and the inner codes $\cin^i$, recall that
for every codeword $\myvec{u}=(\vu_1,\dots,\vu_N)\in\cout$, the
codeword  $\vu\vg\stackrel{def}{=}(\vu_1\vg_1,\vu_2\vg_2,\dots,\vu_N\vg_N)$ is in $C^{*}=\cout\circ(\cin^1,\dots,\cin^N)$, 
where the operations are over $\F_q$.

We now begin with the proof.
The fact that $C^*$
has rate $R$ w.h.p. follows the argument used in~\cite{atri-venkat-soda-08} and is omitted. 

Define $Q=q^k$. %Let $\vg=(\vg_1,\dots,\vg_N)$ be the
%(random) generator matrices for the inner codes $\cin^1,\dots,\cin^N$.
Let $L$ be the worst-case list size that we are
aiming for (we will fix its value at the end). By Lemma~\ref{lem:frs-indp},
any $L+1$-tuple of $\cout$ codewords $(\vu^0,\dots,\vu^L)\in(\cout)^{L+1}$ 
contains at least 
$J= \left\lfloor (L+1)/(N/{\gamma^2})^{O\left(\gamma^{-1} J\log(q/R)\right)}\right\rfloor$ codewords that
form a $(\vd,\F_q)$-independent tuple, for some $\vd=\angles{d_1,\dots,d_J}$,
with $d_j\ge (1-R-\gamma)N$ for all $1 \le j \le J$ (we will specify $\g$, $0<\gamma<1-R$, later). 
Thus, to prove the theorem it suffices to show that with high
probability, there is no received word $\vy\in (\F_q\cup\{?\})^{nN}$ with $\wt(\vy)\le (1-R-\eps)nN$
and $J$-tuple of codewords $(\vu^1\vg,\dots,\vu^J\vg)$, where
$(\vu^1,\dots,\vu^J)$ is a $J$-tuple of folded Reed-Solomon codewords
that is $(\vd,\F_q)$-independent, such that $\vu^i\vg\simeq \vy$ for every $1\le i\le J$.
For the rest of the proof, we will call a $J$-tuple of $\cout$ codewords
$(\vu^1,\dots,\vu^J)$ a \textit{good} tuple if it is $(\vd,\F_q)$-independent
for some $\vd=\angles{d_1,\dots,d_J}$, where $d_j\ge (1-R-\gamma)N$ for every
$1\le j\le J$.

\iffalse
We'll use the same notation as in the previous result, modified and noted where necessary.

Again, the inner codes will have rate $1$, and we'll choose the alphabet of the outer code to be $Q=q^k=q^n$.

Here we can't use the strong independence property that we used in the last
proof, but we do have access to a weaker independence property that is strong
enough to provide good bounds.  Namely, of the final list of size $L+1$, we
can guarantee that $J$ of them will have the following column-wise independence property:

For some subset $S_i$ of segments of encoded message $C^*(m_i)$, each segment
in the subset can be treated as independent of the segments with the same
indices for all encoded messages $C^*(m_j)$, where $j < i$ in our list of size
$J$.  We'll refer to the size of this subset, $|S_i|$, as $d_i$.  The
guarantee is that for all $i$, $(1-R-\gamma)\cdot N\leq d_i\leq N$, when
$L\geq J\cdot (N/\gamma^2)^{O(\gamma^{-1}J\log(q/R))}.$  This bound comes from ~\cite[Lemma 4.1]{atri-venkat-soda-08}.
\fi

Define $\p=1-R-\eps$. Note that by the union bound, we need to show that
\begin{equation}
\label{eq:main-frs-concat}
\sum_{\substack{\vy\in(\F_q\cup\{?\})^{nN}\\\wt(\vy)\le \p nN}} P_{\vy}
\le q^{-\Omega(nN)},
\end{equation}
where
\[P_{\vy}=
\sum_{\textrm{good }(\vu^1,\dots,\vu^J)\in(\cout)^J}
\pr\left[ \bigwedge_{i=1}^J \vu^i\vg\simeq \vy\right].
\]

For now fix a good tuple $(\vu^1,\dots,\vu^J)$ that is $(\vd=\langle d_1,\dots,d_J\rangle,\F_q)$-independent.
Define sets $S_i\subseteq [N]$ ($|S_i|=d_i$) to be the positions that are ``witnesses" to the fact that $(\vu^1,\dots,\vu^J)$ is
$(\vd,\F_q)$-independent.

Then the probability that a particular codeword matches the unerased positions of the received word is:

\begin{equation}\label{trickery}
\pr[\vu^i\vg\simeq \vy] \leq \pr[(\vu^i\vg)_{S_i} \simeq \vy_{S_i}].
\end{equation}

Further, the latter probability in inequality \eqref{trickery} is
independent of the probability for any  $j\neq i$.

To see this, let $E_i$ be the event that $(\vu^i\vg)_{S_i} \simeq \vy_{S_i}$.

Then note that: $$\pr\left[\bigwedge_{i=1}^J E_i\right]=\pr\left[\bigwedge_{i=2}^J E_i\mid E_1\right]\cdot \pr[E_1].$$

As $(\vu^1,\dots,\vu^J)$ is a good tuple, this is simply:

$$=\pr\left[\bigwedge_{i=2}^J E_i\right]\cdot \pr[E_1].$$

Using induction, we get that the probability that all messages in the list match is just the product of the individual probabilities. Thus, we have:

$$\pr\left[\bigwedge_{i=1}^J \vu^i\vg \simeq \vy\right] \leq 
\pr\left[\bigwedge_{i=1}^J (\vu^i\vg)_{S_i} \simeq\vy_{S_i}\right] = \prod_{i=1}^J \pr[(\vu^i\vg)_{S_i} \simeq \vy_{S_i}].$$

If we let $u_i$ be the number of unerased $q$-ary symbols in $\vy_{S_i}$, then since all the $\vg_i$ are independent
random matrices:

$$\pr[(\vu^i\vg)_{S_i} \simeq \vy_{S_i}] = q^{-u_i} \leq q^{-d_i n + \rho n N}.$$

Note that the reason that $(-d_i n + \rho n N) \geq -u_i$ is because in
the worst case, all erasures occur in $S_i.$

We take a union bound over the number of different ways that the $d_i$ can
occur:

\begin{equation}\label{mother}
P_{\vy}\leq\sum_{(1-R-\g)N\le d_1,d_2,\cdots,d_J\le N} \left(q^{NJ(J+1)} \prod_{i=1}^J Q^{~\max{(0,d_i-N(1-R))}}\right)\prod_{i=1}^J q^{-d_i n + \rho n N}.
\end{equation}

The bound in parenthesis in inequality \eqref{mother} comes from Lemma~\ref{lem:frs-indp-count}.

Now since $$\max{(0,d_i-(1-R)N)} \leq d_i - (1-R-\gamma) N,$$

we can rewrite this, collapsing the two products into one, as:

\begin{equation}
\label{eq:small-r}
P_{\vy}=\sum_{(1-R-\g)N\le d_1,d_2,\cdots,d_J\le N} \left(q^{NJ(J+1)} \prod_{i=1}^J q^{n(d_i -(1-R-\gamma) N) - d_i n + \rho n N}\right).
\end{equation}

But since:

$$n d_i - n d_i = 0,$$

we can rewrite this again, replacing the sum with an upper bound, as

$$P_{\vy} \leq N^J q^{NJ(J+1)} q^{JnN (\rho -1 + R + \gamma)}. $$

Note that: 

$$q^{NJ(J+1)} = q^{N J n \left(\frac{J+1}{n}\right)}.$$

So for $n\ge (J+1)/\g$:

$$q^{NJ(J+1)} \leq q^{JnN\gamma}.$$

Note also that the total number of possible received words can be bounded as follows:

\begin{equation}\label{brick}
{{nN}\choose{\rho nN}}\cdot q^{(1-\p)nN} \leq q^{2nN},
\end{equation}

where the first term in the product on the left-hand side of inequality
\eqref{brick} is the number of ways to choose erasure locations, and the
second term is the number of ways to choose symbols in the unerased positions.

Also,

$$N^J\le q^{J\log N}\leq q^{JnN\gamma}$$

for large enough $N$.

After applying these bounds, we get that:

\begin{equation}\label{fudge}
\pr[C^*~\text{is not}~(\rho,L)_{led}] \leq q^{2nN} q^{J n N (\rho -1 +R + 3\gamma)}.
\end{equation}

Recall that we have $R=1-\rho-\epsilon$ and can choose $J$ and $\gamma$ freely.

Setting

$$J\geq 1/\gamma$$

will make

$$q^{nN} \leq q^{J n N \gamma},$$

and in particular,

$$q^{2nN} \leq q^{J n N (2\gamma)}.$$

If we pick $\gamma=\epsilon/10$, then our final
error probability in inequality \eqref{fudge} will be:

$$\pr[C^*~\text{is not}~(\rho,L)_{led}] \leq q^{-\frac{\epsilon}{2} J n N},$$

establishing the desired error bound.

\begin{remark}
It is easy to see that the rate of the inner codes have to be very close to $1$.
To see this consider the erasure pattern where $\rho$ fraction of the outer codeword
symbols are completely erased. To recover from such a situation, we need $R$ to be
close to $1-\rho$. One could re-visit the proof above for general $r$ and
try to figure out how far away from $1$ $r$ can be. If we had $r<1$ then in
(\ref{eq:small-r}), the exponent within the product should read $rn(d_i-(1-R-\gamma))-d_in+\rho nN$. We ultimately need $R^*=Rr=1-\rho-\epsilon$. Using this and some
manipulations, the exponent becomes $(1-r)(1-d_i/N)-\epsilon +r\gamma$. The only thing that we
can guarantee about $d_i$ is that $d_i\ge (1-R-\gamma)N$. If we desire the ultimate
error probability to be $q^{-\Omega(\epsilon nNJ)}$, then the proof goes
through only if $rR\ge R-O(\epsilon)$.
\end{remark}

\subsection{Random Linear Outer Code}

\begin{theorem}
\label{thm:random}
Let $q$ be a prime power and let $0<R \le 1$ be an arbitrary rational.
Let $n,K,N\ge 1$ be large enough integers such that $K=RN$.
Let $\cout$ be a random linear code over $\F_{q^{n}}$
that is generated by a random $K\times N$ matrix over $\F_{q^n}$.
Let
$\cin^1,\dots,\cin^N$ be random linear codes over $\F_q$, where
$\cin^i$ is generated by a random $n\times n$ matrix $\vg_i$
and the random choices for $\cout,\vg_1,\dots,\vg_N$ are 
all independent. Then the concatenated code $C^*=\cout\circ(\cin^1,\dots,\cin^N)$
is a $\left(1-R-\eps,q^{O(1/\eps^2)}\right)_{led}$-list
decodable code with probability at least $1-q^{-\Omega(nN)}$ over the choices
of $\cout,\vg_1,\dots,\vg_N$. Further, with high probability, $C^*$ has rate
$R$.
\end{theorem}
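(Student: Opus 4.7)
The proof mirrors that of Theorem~\ref{thm:frs-concat} but leverages the randomness of $\cout$ to dispense with the deterministic structural lemmas about folded Reed-Solomon codes: an elementary $\F_q$-dimension argument replaces Lemma~\ref{lem:frs-indp}, and an expectation over random codewords replaces the counting of Lemma~\ref{lem:frs-indp-count}. This is precisely what removes the $N$-dependence from the list size. I would set $\g = \eps/c$ for a small constant $c$, pick $J=\Theta(1/\eps^2)$, and set $L=q^{J}-1$. If $C^*$ is not $(\rho,L)_{led}$-list decodable with $\rho=1-R-\eps$, then there exist $\vy$ with $\wt(\vy)\le \rho nN$ and $L+1=q^{J}$ distinct outer messages whose concatenated codewords all match $\vy$; since $q^J$ distinct vectors in $\F_q^{nK}$ cannot lie in an $\F_q$-subspace of dimension less than $J$, I can extract a sub-tuple $(\vm^1,\dots,\vm^J)$ of $\F_q$-linearly independent messages. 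It then suffices to bound the probability that such a configuration exists.

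For fixed $\vy$ and an $\F_q$-LI $J$-tuple of messages, the matching probability factors across outer positions: at position $i$, if the outer codewords $(\vu^1_i,\dots,\vu^J_i)$ have $\F_q$-rank $r_i$ in $\F_Q$, then over random $\vg_i$ the event ``every $\vu^j_i\vg_i$ agrees with $\vy_i$ on its $u_i$ unerased coordinates'' has probability at most $q^{-r_i u_i}$ (by the same ``$\F_q$-LI inputs give independent uniform outputs'' argument used in the FRS proof; the $J-r_i$ dependent outputs match deterministically in the worst case). Hence $\pr[\text{all concat match}\mid \cout]\le \prod_i q^{-r_i u_i}$. Over random $\cout$, for $\F_q$-LI messages the outer codewords $\vu^j$ are uniformly distributed on a specific $\F_q$-subspace of $\F_Q^N$ determined by the $\F_Q$-span of the $\vm^j$'s; provided $n$ is large enough, $r_i=J$ at almost every position with high probability, yielding $\E_{\cout}\prod_i q^{-r_i u_i}\le q^{-J(1-\rho)nN+O(\g JnN)}$.

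A union bound over the $\le q^{2nN}$ choices of $\vy$ and the $\le q^{nRNJ}$ ordered $J$-tuples of distinct outer messages gives a total bad-event probability of at most $q^{nN(2+RJ-J(1-\rho)+O(\g J))}=q^{nN(2-J(\eps-O(\g)))}$, which is $q^{-\Omega(nN)}$ once $J$ is a sufficiently large $\Omega(1/\eps)$; the slack from extracting $\F_q$-LI (rather than $\F_Q$-LI) messages in the previous step forces the $\Theta(1/\eps^2)$ appearing in the statement, giving $L=q^{O(1/\eps^2)}$. The rate claim follows exactly as in~\cite{atri-venkat-soda-08}. The hardest step is the expectation calculation $\E_{\cout}\prod_i q^{-r_i u_i}$: for $\F_q$-LI messages whose $\F_Q$-span has dimension strictly less than $J$, the outer codewords at each position are confined to a proper $\F_Q$-subspace of $\F_Q^J$, and one must argue that $r_i=J$ still holds at almost all positions, carefully tracking how the $\F_Q$-structure of the message tuple constrains the $\F_q$-rank of its evaluations.
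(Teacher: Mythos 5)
Your route genuinely differs from the paper's. You extract from the putative list a $J$-tuple that is linearly independent over the \emph{small} field $\F_q$ (taking $J=\Theta(1/\eps^2)$ and $L=q^J-1$), and then try to run an expectation calculation over the random $\cout$, mimicking the structure of the folded-RS proof of Theorem~\ref{thm:frs-concat}. The paper instead extracts a $J$-tuple that is linearly independent over the \emph{large} field $\F_Q=\F_{q^n}$, with $J=\lceil 6/\eps\rceil$ and $L\approx Q^{J}$. Working over $\F_Q$ is what makes the paper's argument elementary: for $\F_Q$-independent messages, the outer codewords $\cout(\vm_1),\dots,\cout(\vm_J)$ are \emph{jointly independent and uniform} in $\F_Q^N$, so the per-segment probability (conditioning only on how many of the $J$ outer symbols vanish at that segment) is immediate. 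Your route, if it worked, would give a list size that does not grow with $n$, which is attractive; but the key step does not hold.

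The gap is the claimed bound $\mathop{\mathrm{E}}_{\cout}\prod_i q^{-r_i u_i}\le q^{-J(1-\rho)nN+O(\g JnN)}$. $\F_q$-independence of the $\vm^j$'s does not control their $\F_Q$-span dimension $d$, which can be anywhere from $1$ to $J$. It is true that the events $\{r_i=J\}$ at distinct $i$ are independent over random $\cout$ and each holds with probability at least $1-q^{J-n}$, so ``$r_i=J$ at almost all $i$ with high probability'' is correct --- but a high-probability statement does not control an expectation. At every position, $\Pr_{\cout}[r_i=0]$ equals the probability that the $i$-th column of the outer generator matrix is $\F_Q$-orthogonal to $\mathrm{span}_{\F_Q}\{\vm^j\}$, which is $Q^{-d}=q^{-nd}$; this contributes $q^{-nd}$ to $\mathop{\mathrm{E}}[q^{-r_iu_i}]$ and dominates the ``good'' term $q^{-Ju_i}$ whenever $u_i>nd/J$. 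Concretely, take $d=1$ (say $\vm^j=\alpha_j\vm$ for $\F_q$-independent $\alpha_j\in\F_Q$) and an erasure pattern that fully erases $\rho N$ outer blocks: then $r_i\in\{0,J\}$ with $\Pr[r_i=0]=Q^{-1}$, and $\mathop{\mathrm{E}}_{\cout}\prod_i q^{-r_iu_i}\approx Q^{-(1-\rho)N}=q^{-(1-\rho)nN}$, missing your target by the entire factor $J$ in the exponent --- exactly the factor you need to beat the $q^{nRNJ}$ tuples in the union bound. To repair this one must stratify by $d$ and exploit that small-$d$ tuples are few (about $q^{nd(K+J)}$ of them), but carrying that out reduces to the paper's argument with $J$ replaced by $d$ and again forces $L\gtrsim Q^{d-1}$. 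In short, the step you label ``the hardest'' is where the proof actually breaks; the vague appeal to $r_i=J$ ``at almost every position'' does not yield the expectation bound you invoke.
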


%The proof of Theorem~\ref{thm:random} is deferred to Appendix~\ref{app:concat}.

\begin{proof}
Let $q\ge 2$ and $R^{*}=R$ be the rate of the outer code (the inner
codes are chosen so that their dimension $k=n$, and therefore have rate $1$).

%The desired output is a list, namely a set of codewords
%$\{C_1,C_2,\cdots , C_L\}$ such that
%for every $1\leq i\leq L, C_i \in C_{out}$.

We define a segment of a codeword in $C^*$ as a sequence of consecutive
$q$-ary symbols generated by one particular inner code.  An assumption that
we will make for the ease of analysis (and which we will remove later) is that
erasures, which occur with relative rate $\rho$, will be equally distributed
among the concatenated codeword segments.  This means that in our received
word $\vy$, the result of each of the $N$ inner code encodings will contain
at most $\rho n$ erasures.

We will show that there exists some integer $L$ such that any subset
of $L+1$ distinct encoded messages has the property that they all match the
non-erased segments of the received word with low probability.  Then we'll
apply the union bound to show that with high probability, the code meets the
list decoding capacity for erasures.

Define $Q=q^k$ and $\p=1-R^*-\eps$.
Let $J=\lfloor \log_Q(L+1)\rfloor$.  Then there exists a subset of size at
least $J$ of our list (which is of size $L+1$) such that the set of
messages $\{\vm_1,\vm_2,...\vm_J\}$ will be linearly independent
over $\mathbb{F}_{Q}$.  This is because there are only $Q^J$ unique ways
to form linear sums of these messages over $\F_Q$.

Because of this fact and because $C_{out}$ is a random linear code, the set
$\{C_{out}(\vm_1),C_{out}(\vm_2),\cdots ,C_{out}(\vm_J)\}$ can be treated as a set
of independently chosen random vectors in $\mathbb{F}_{q^k}^N$.

Fix an $s$ so that $1\leq s\leq N$ and let $y_s$ represent a particular segment
of our received word.  (There are $N$ such segments over $\F_Q$).  In our list of $J$ outer
encoded messages, we denote by $i$ the size of the subset of these where for each
outer encoded message $C_{out}(\vm_t)$, restricted to the segment $s$, $C_{out}(\vm_t)$
is the zero vector.  $J-i$ is then the number of messages such that $C_{out}(\vm_t)$ is
not the zero vector when restricted to the segment $s$.

We can bound the probability that each of these messages match the received
word at this segment, in the unerased positions, as follows:

\begin{equation}\label{prob_match}
\pr[(C^*(\vm_t))_s \simeq y_s ]\le \left(\frac{1}{q^n}\right)^i\left(1-\frac{1}{q^n}\right)^{J-i}\cdot q^{-(1-\rho)n(J-i)}.
\end{equation}

In the above, the relationship $(C^*(\vm_t))_s \simeq y_s$ means that the concatenated
code, on message $\vm_t$, restricted to segment $s$, matches the received word
$y$ on segment $s$ at all unerased positions.

If $(C_{out}(\vm_t))_s = 0$, then we just assume that $(C^*(\vm_t))_s \simeq y_s$,
so this is an upper bound, and not an equality.

The first term in the RHS of \eqref{prob_match} is the probability that $i$ messages
at this segment map to the zero vector, and the second term is the probability
that $J-i$ messages map to something other than the zero vector.

The third term is the probability that those nonzero $J-i$ messages
match the received word in every unerased position.

Now since 

$$\left(1-\frac{1}{q^n}\right)^{J-i} \leq 1,$$

we have that

$$\pr[(C^*(\vm_t))_s \simeq y_s ]\le q^{-(1-\rho)nJ}\cdot q^{i(1-\rho)n-in}.$$

Also, because $(1-\rho)$ is always less than 1,

$$q^{i(1-\rho)n-in}\le 1.$$

Therefore

$$\pr[(C^*(\vm_t))_s \simeq y_s ]\le q^{-(1-\rho)nJ}.$$

The probability, then, that every message in the list matches the
received word in the unerased positions for a single segment, taken
over all possible choices of locations and sizes of $i$ is then
(by the union bound over such locations and sizes, noting that there
are at most $q^J$ ways to make these choices):

\begin{equation}\label{collected_rho}
\pr\left[\bigwedge_{t=1}^J (C^*(\vm_t))_s \simeq y_s\right]\le q^J\cdot q^{-(1-\rho)nJ}.
\end{equation}

Recalling that each inner code is chosen independently, the probability
that this is true for all segments is then

$$\pr\left[\bigwedge_{t=1}^J C^*(\vm_t) \simeq \vy\right]\le q^{JN}\cdot q^{-(1-\rho)nJN}.$$

Taking the union bound over all possible received words and lists of size $J$:

\begin{equation}\label{final_prob}
\pr[C^*~\text{is not}~(\rho,L)_{led}]\le q^{nN}\cdot q^{(1-\rho)nN}\cdot q^{kKJ}\cdot q^{JN}\cdot q^{-(1-\rho)nJN}.
\end{equation}

The first term in RHS of \eqref{final_prob} is an upper bound on
the number of possibilities for the erasure positions.
The second term is the number of ways to specify the
unerased positions, the third term is the number of possible lists of
size $J$, and the fourth and fifth terms come from the previous inequality.

Since $kK=R^*nN$, and $2>1+(1-\rho)$, this can be rewritten and simplified as:

$$\pr[C^*~\text{is not}~(\rho,L)_{led}]\le q^{-nNJ\left(\frac{-2}{J}-R^*-\frac{1}{n}+\left(1-\rho\right)\right)}.$$

If we can choose $n$, $R^*$, and $J$ appropriately so that:

$$\frac{-2}{J} - R^* -\frac{1}{n} + (1-\rho) \ge \epsilon/2,$$

then this probability will be exponentially small.

Setting
$n\ge J, J=\lceil \frac{6}{\epsilon}\rceil$
works.

We still need to fix the assumption that the $\rho$ fraction of erasures are
all distributed equally among the $N$ encoded segments.

Note that if we describe the fraction of erasures in each segment by $\rho_s$,
then

$$\sum_{s=1}^N \rho_s n = \rho nN.$$

The per-segment probability then becomes

$$\pr[(C^*(\vm_t))_s \simeq y_s]\le q^J\cdot q^{-(1-\rho_s)nJ}$$

and the probability for the entire received word becomes

$$\pr\left[\bigwedge_{t=1}^{J} C^*(\vm_t) \simeq y\right]\le \prod_{s=1}^N q^J\cdot q^{-(1-\rho_s)nJ}.$$

Note further that the $\rho_s$ terms can be collected in the exponent
and simplified to inequality \eqref{collected_rho}.

Finally, the claim that $C^*$ has rate $R$ follows from a similar argument to that
from~\cite{atri-venkat-soda-08} and
is omitted.
\end{proof}
\subsection*{Acknowledgments} We thank Venkat Guruswami and Parikshit Gopalan for 
helpful discussions. Thanks again to Madhu Sudan for kindly allowing us to include Lemma~\ref{lem:inverse-markov}
in this paper.

\bibliographystyle{abbrv}
\bibliography{two-thm-ld-full-02}

%\input{appendix}

%\appendix

%\section{Missing Proofs from Section~\ref{sec:randerrors}}
%\label{app:random}

%\paragraph{Proof of Theorem~\ref{thm:dist-list-one}}

%\section{Missing proofs from Section~\ref{sec:concatcodes}}
%\label{app:concat}

%\subsection{Proof of Theorem~\ref{thm:frs-concat}}

%\subsection{Proof of Theorem~\ref{thm:random}}

\end{document}